\def\polylog{\mathop{\rm polylog}}
\def\wgt{\mathop{\rm wgt}}
\def\ord{\mathop{\rm ord}}
\def\rank{\mathop{\rm rank}}
\def\diag{\mathop{\rm diag}}
\def\css{\mathop{\rm CSS}}
\def\tr{\mathop{\rm tr}\nolimits}
\def\LL{\mathop{\mathrm L}\nolimits}
\def\RR{\mathop{\mathrm R}\nolimits}
\def\lp{\mathop{\rm LP}}
\def\hp{\mathop{\rm HP}}
\def\gb{\mathop{\rm GB}}
\def\CC{\mathop{\cal C}}% centralizer
\definecolor{lightblue}{rgb}{.90,.95,1}
\setlist{nosep}
\newtheorem{theorem}{Theorem}
\newtheorem{statement}[theorem]{Statement}
\newtheorem{lemma}[theorem]{Lemma}
\newtheorem{example}[theorem]{Example}
\def\bs#1{\boldsymbol{#1}}
\begin{document}
\title{Quantum two-block group algebra codes}

\author{Hsiang-Ku Lin}
\affiliation{Department of Physics \& Astronomy, University of
  California, Riveside, California 92521 USA}

\author{Leonid P. Pryadko}
\affiliation{Department of Physics \& Astronomy, University of
  California, Riveside, California 92521 USA}
\date{\today}

\begin{abstract}
  We consider quantum two-block group algebra (2BGA) codes, a
  previously unstudied family of smallest lifted-product (LP) codes.
  These codes are related to generalized-bicycle (GB) codes, except a
  cyclic group is replaced with an arbitrary finite group, generally
  non-abelian.  As special cases, 2BGA codes include a subset of
  square-matrix LP codes over abelian groups, including quasi-cyclic
  codes, and all square-matrix hypergraph-product codes constructed
  from a pair of classical group codes.  We establish criteria for
  permutation equivalence of 2BGA codes and give bounds for their
  parameters, both explicit and in relation to other quantum and
  classical codes.  We also enumerate the optimal parameters of all
  inequivalent connected 2BGA codes with stabilizer generator weights
  $W\le 8$, of length $n\le 100$ for abelian groups, and $n\le 200$
  for non-abelian groups.
\end{abstract}
\maketitle
\section{Introduction}
\label{sec:introduction}

Recent years have seen a substantial progress in theory of quantum
low-density parity-check (LDPC) codes\cite{Evra-Kaufman-Zemor-2020,%
  Hastings-Haah-ODonnell-2020,Panteleev-Kalachev-2020,%
  Breuckmann-Eberhardt-2020,Panteleev-Kalachev-2021}.  Generally, any
code family with bounded-weight stabilizer generators and distance
scaling logarithmically or faster with the block length has a finite
fault-tolerant threshold to scalable error
correction\cite{Dennis-Kitaev-Landahl-Preskill-2002,%
  Kovalev-Pryadko-FT-2013,Dumer-Kovalev-Pryadko-bnd-2015}.  Unlike in
the case of classical LDPC codes\cite{Gallager-book-1963,%
  Chung-Forney-Richardson-Urbanke-2001} where random matrices are
commonly used to define the code, due to a commutativity constraint,
an algebraic ansatz is required in the case of quantum LDPC codes.
For over a decade, no construction was known to give distances larger
than a square root of the block size $n$, up to a polylogarithmic
factor\cite{kitaev-anyons,Dennis-Kitaev-Landahl-Preskill-2002,%
  Freedman-Meyer-Luo-2002, Tillich-Zemor-2009,
  Kovalev-Pryadko-Hyperbicycle-2013,
  Guth-Lubotzky-2014,Evra-Kaufman-Zemor-2020,Zeng-Pryadko-2018,%
  Zeng-Pryadko-hprod-2020,*Zeng-Pryadko-erratum-2022,Kaufman-Tessler-2021}.
The barrier was broken by Hastings, Haah, and
O'Donnell\cite{Hastings-Haah-ODonnell-2020} who demonstrated a quantum LDPC
code family with the distance scaling as
$\mathcal{O}(n^{3/5}/\polylog n)$.  Soon followed related
constructions\cite{Panteleev-Kalachev-2020,%
  Breuckmann-Eberhardt-2020}, with Panteleev and
Kalachev\cite{Panteleev-Kalachev-2021} finally proving the existence
of asymptotically good bounded-stabilizer-generator-weight quantum
LDPC codes, with non-zero asymptotic relative distances for any
asymptotic rate $R<1$.  Unfortunately, the constructions in
Refs.~\onlinecite{Hastings-Haah-ODonnell-2020,%
  Panteleev-Kalachev-2020,Breuckmann-Eberhardt-2020,Panteleev-Kalachev-2021}
tend to give rather long codes, and the lower
bound\cite{Panteleev-Kalachev-2021} for the row weight to give
asymptotically good quantum codes is also very large.

In Ref.~\cite{Wang-Pryadko-2022}, in an attempt to construct shorter
quantum LDPC codes with large distances, one of us studied a class of
generalized bicycle (GB)
codes\cite{Kovalev-Pryadko-Hyperbicycle-2013,Panteleev-Kalachev-2019}.
These are index-two quantum quasi-cyclic (qQC) codes, a special case
of qQC
codes\cite{Hagiwara-Imai-2007,Kasai-Hagiwara-Imai-Sakaniwa-2012,%
  Kasai-Hagiwara-Imai-Sakaniwa-2012,%
  Galindo-Hernando-Matsumoto-2018,Panteleev-Kalachev-2019} where the
general upper distance bound related to the number of blocks does not
apply.  Important advantages of GB codes are overcomplete set of
minimum-weight stabilizer generators which may improve their
performance in the fault-tolerant setting, and their regular structure
which simplifies implementation and iterative
decoding\cite{Panteleev-Kalachev-2019,Raveendran-Vasic-2021}.
Furthermore, GB codes include\cite{Wang-Pryadko-2022} codes with
linear distance scaling, unlike, e.g., the hypergraph-product
codes\cite{Tillich-Zemor-2009}, where the distance can never exceed a
square root of the block length.  However, the regular structure of
the corresponding matrices also implies\cite{Wang-Pryadko-2022} that a
GB code with row weight $W$ can be mapped to a code local in dimension
$D\le W-1$, which implies a power-law upper bound on the distance,
$d\le {\cal O}(n^{1-1/D})$, see
Ref.~\onlinecite{Bravyi-Terhal-2009,*Bravyi-Poulin-Terhal-2010}.
Numerically, it appears that fixed-weight GB codes have distance
scaling as $A(W)n^{1/2}$, where $A(W)$ is an increasing function of
the weight $W$, although a power-law scaling $d={\cal O}(n^{\alpha})$
with $\alpha>1/2$ but close to the lower bound cannot be
excluded\cite{Wang-Pryadko-2022}.

The goal of this work is to explore a class of codes similar to GB
codes, where  more general symmetry groups are used instead of cyclic
groups\cite{[{Some of the present results have been announced %
    previously, see }]Wang-Lin-Pryadko-2023}.
% \cite{*[{prepended   text}][{appended text}]key}
These codes are a special case of two-block CSS
codes\cite{Kovalev-Pryadko-Hyperbicycle-2013}, and also are the
smallest lifted-product (LP) codes\cite{Panteleev-Kalachev-2021}.  In
fact, this work was inspired by the LP codes construction, along with
the related work by the same authors on two-block codes based on
abelian group algebras\cite{Kalachev-Panteleev-2020}.  Our main reason
to study these \emph{two-block group algebra} (2BGA) codes, especially
in the non-abelian case, is that general upper distance
bound\cite{Panteleev-Kalachev-2021} for LP codes does not apply in the
two-block case, and neither do the upper distance
bounds\cite{Wang-Pryadko-2022} for GB codes with row weight $W$ since
more general abelian or non-abelian groups do not give matrices with
structure as regular as that of the circulant matrices.  On the other
hand, most of the advantages of the GB codes remain.  In particular,
these more general codes also have naturally overcomplete sets of
minimum-weight stabilizer generators, which is expected to improve
their performance in the fault-tolerant setting.

The outline of the rest of the paper is as follows.  We give some
background information in Sec.~\ref{sec:notations}.  In
Sec.~\ref{sec:two-block}, we discuss general properties of quantum CSS
codes constructed from two square commuting matrices.  In
Sec.~\ref{sec:construction} we give the construction of 2BGA codes and
analyze their properties, and in Sec.~\ref{sec:numerical} discuss the
parameters of 2BGA codes constructed numerically.  Finally, we give
the conclusions in Sec.~\ref{sec:conclusions}.  More technical proofs
for Sections~\ref{sec:two-block} and \ref{sec:construction} are
collected in Appendices \ref{sec:proofs-III} and \ref{sec:proofs-IV},
respectively.  Appendix \ref{sec:examples} gives additional examples
of index-4 qQC 2BGA codes constructed from groups
$C_{mh}$ and $D_{m}$.

\section{Notations and known facts}
\label{sec:notations}

\subsection{Classical codes}

A classical $q$-ary error-correcting code with parameters $(n,K,d)_q$
is a set of $K$ strings of length $n$ in an alphabet with $q$ distinct
characters, where any two strings differ in $d$ or more
positions\cite{MS-book}.  In a \emph{linear code} $C\subset F^n$ using
as the alphabet a finite Galois field $F\equiv \mathbb{F}_q$, where
$q=p^m$ is a power of a prime $p$, the field characteristic, the
strings in the code form a linear space of dimension $k$, so that
$K=q^k$.  The parameters of such a code are denoted $[n,k,d]_q$, where
the distance $d$ is the minimum Hamming weight of a non-zero vector in
the code.  For a trivial code with $k=0$ (empty set of non-zero
vectors), we set $d$ equal to infinity.

Rows of a \emph{generator matrix} $G$ of a linear code $C\equiv C_G$
are non-zero vectors in the code which include a complete basis, so
that any vector of the code can be written as a linear combination of
rows of $G$; evidently, $\rank G=k$.  The code $C^\perp$ \emph{dual}
to $C$ is formed by all vectors in $F^n$ orthogonal to the vectors in
$C$.  A generator matrix $H$ of the code $C_G^\perp$ is called a
parity check matrix of the original code $C_G$, it satisfies the duality relation
\begin{equation}
  \label{eq:ranks}
  GH^T=0,\quad \rank G+\rank H=n.
\end{equation}

Given a string $\bs c\in{F}^n$, denote
${\cal V}\equiv [n]\equiv \{1,2,\ldots,n\}$ the set indexing the individual
characters.  For any \emph{index set} ${\cal I}\subseteq {\cal V}$ of
length $|{\cal I}|=r$, let $\bs c[{\cal I}]\in{F}^r$ be a substring of
$\bs c$ with the characters in all positions $i\not\in {\cal I}$
dropped.  We say that $\bs c[\mathcal{I}]$ is the string $\bs c$
\emph{punctured} outside $\mathcal{I}$.  Similarly, for an $n$-column
matrix $G$, the punctured matrix $G[{\cal I}]$ is formed by the rows
of $G$ punctured outside $\mathcal{I}$. If $C=C_G$ is an $F$-linear
code with the generating matrix $G$, then the code of length
$|{\cal I}|$ with the generating matrix $G[{\cal I}]$ is the code
punctured outside ${\cal I}$,
${C}_{\rm p}({\cal I})\equiv\{c[{\cal I}]\,|\, c\in{C}\}$.

The \emph{shortened} code ${C}_{\rm s}({\cal I})$ is formed similarly,
except only from the codewords supported inside ${\cal I}$,
${C}_{\rm s}({\cal I})=\{\bs c[{\cal I}]\,|\,\bs
c=(c_1,c_2,\ldots,c_n)\in {C}$ and $c_i=0$ for each
$i\not\in {\cal I}\}$.  The dual of a punctured code
${C}_{\rm p}({\cal I})$ is the shortened dual code,
$[{C}_{\rm p}({\cal I})]^\perp=({C}^\perp)_{\rm s}({\cal I})$.  To
express this relation in terms of matrices, consider a pair of
mutually dual matrices in Eq.~(\ref{eq:ranks}) and a code
${C}\equiv {C}_G={C}_H^\perp$.  Denote a generator matrix of the
shortened code ${C}_{\rm s}({\cal I})$ as $G_{\cal I}$.  Duality
between the punctured original and the shortened dual codes implies
that the corresponding generator matrices $G_{\cal I}$ and
$H[{\cal I}]$ are also mutually dual\cite{MS-book},
\begin{equation}
  \label{eq:dual-shortening}
  H[{\cal I}]\, G_{\cal I}^T=0,\quad \rank G_{\cal I}+\rank H[{\cal I}]=|{\cal I}|.
\end{equation}
Similarly, $H_{\cal I}$ is a dual of the punctured matrix $G[{\cal I}]$.

Relevant for this work are left and right group codes constructed in a
group algebra\cite{Milies-2019,Ferraz-Milies-Taufer-2021}.  Namely,
for a given finite field $F$ and a finite group $G$ of order
$|G|=\ell$, we consider the group algebra (a ring) $F[G]$ defined as an
$F$-linear space of all formal sums
\begin{equation}
  \label{eq:algebra-element}
  x\equiv \sum_{g\in G}x_g g,\quad x_g\in F,
\end{equation}
where group elements $g\in G$ serve as basis vectors,
equipped with the product naturally associated with the group
operation,
\begin{equation}
  \label{eq:FG-product}
  ab=\sum_{g\in G}\biggl(\sum_{h\in G} a_h b_{h^{-1}g}\biggr) g, \quad a,b\in F[G].
\end{equation}
Evidently, Eq.~(\ref{eq:algebra-element}) defines a one-to-one map
between any vector $\bs x\in F^\ell$ with coefficients $x_g$ labeled
by group elements and a group algebra element $x\in F[G]$, and a similar
map between sets of vectors and sets of group algebra elements.  A
left $G$-code in $F^\ell$ is a map of a left ideal $J_L$ in the ring
$F[G]$, defined as an $F$-linear space of elements of $F[G]$ such that for
any $x\in J_L$ and any $r\in F[G]$, $rx\in J_L$.  A right $G$-code is
defined similarly in terms of a right ideal $J_R$, with the opposite
order in the product, $xr\in J_R$ for any $x\in J_R$ and any
$r\in F[G]$.

The structure of ideals in $F[G]$ is particularly simple if
characteristic of the field does not divide the group size,
$\gcd(p,\ell)=1$.  Then, according to Maschke's theorem, the group algebra is
semisimple, and any ideal is a principal ideal generated by an
idempotent, e.g., $J_L=F[G]\cdot f_J$ for a left ideal,
with idempotent $f_J^2=f_J\in J_L$, and similarly,
$J_R=e_J\cdot F[G]$ for a right ideal, with idempotent
$e_J^2=e_J\in J_R$ (see, e.g.,
Corollary 2.2.5 in Ref.~\onlinecite{Drozd-Kirichenko-book-1994}).

The usual inner product in $F^\ell$ is related to the group trace
by a linear
map\cite{Borello-delaCruz-Willems-2022} $\,\bs{\widehat{\ }}:F[G]\to
F[G]$,
\begin{equation}
\widehat a\equiv\sum_{g\in G}a_{g^{-1}} g=\sum_{g\in G}a_g g^{-1}.\label{eq:hat-operator}
\end{equation}
Namely, for any $\bs a, \bs b\in F^\ell$, and the corresponding group
algebra elements $a, b\in F[G]$,
\begin{equation}
  \label{eq:inner}
  {\bs a}\cdot {\bs b}\equiv \sum_{g\in G}a_g b_g
  =\tr_G(\widehat{a} b)=\tr_G(b\widehat a).
\end{equation}
As a reminder, the group trace is defined as the coefficient of the
group identity element $1\in G$: for any $a\in F[G]$,
$\tr_G(a)\equiv a_{1}\in F$.

Given a right group code in a semisimple group algebra $F[G]$, that
is, a right ideal $J_R\equiv \widehat a\cdot F[G]$
generated by an element $\widehat a\in F[G]$, any group algebra
element $x$ corresponding to a vector $\bs x$ in the orthogonal code
satisfies\cite{Borello-delaCruz-Willems-2022} the equation $x a =0$.
If we denote an idempotent $e_a^2=e_a\in F[G]$ such that $e_aa=a$, the
solution of the orthogonality equation is the left ideal
$J_L\equiv F[G]\cdot (1-e_a)$.

\subsection{Quantum CSS codes}

A quantum Calderbank-Shor-Steane (CSS)
code\cite{Calderbank-Shor-1996,*Steane-1996} ${Q}=\css(H_X,H_Z)$ with parameters
$[[n,k,d]]_q$ over a Galois field $F$ is isomorphic to a direct sum of
an $X$- and a $Z$-like codes,
\begin{equation}
  \label{eq:css-code}
Q={Q}_X\oplus {Q}_Z={C}_{H_Z}^\perp/{C}_{H_X}\oplus
  {C}_{H_X}^\perp/{C}_{H_Z},
\end{equation}
where each term in the right-hand side (r.h.s.) is a quotient of two
linear spaces in $F^n$, and rows of the stabilizer generator matrices
$H_X$ and $H_Z$ must be orthogonal,
\begin{equation}\label{eq:CSS-orthogonality}%eq5
  H_XH_Z^T=0.
\end{equation}
Explicitly, e.g., elements of ${Q}_X$ are equivalence classes of
vectors orthogonal to the rows of the matrix $H_Z$, with any two
vectors whose difference is a linear combination of the rows of $H_X$
identified.  Vectors in the same class are called mutually degenerate,
while vectors in the class of the zero vector are called trivial.  The
codes ${Q}_X$ and ${Q}_Z$ have $q^k$ degeneracy classes
each, where
\begin{equation}
  \label{eq:k-CSS}
  k=n-\rank H_X-\rank H_Z
\end{equation}
is the quantum code dimension.  The distance of the code is
$d\equiv \min(d_X,d_Z)$, where the two CSS distances,
\begin{equation}%eq7
  \label{eq:d-CSS}
  d_X=\min_{\bs c\in C_{H_Z}^\perp\setminus C_{H_X}}\wgt \bs c,\quad
  d_Z=\min_{\bs c\in C_{H_X}^\perp\setminus C_{H_Z}}\wgt \bs c,
\end{equation}
are the minimum weights of non-trivial vectors (any representative) in
${C}_{H_Z}^\perp$ and ${C}_{H_X}^\perp$, respectively.  A set of
logical operators' representatives in $Q_X$ and $Q_Z$ can be chosen to
form $k$ canonically conjugate pairs.  Equivalently, logical generator
matrices $L_X$ and $L_Z$ with $k$ rows each can be constructed such
that
\begin{equation}
  \label{eq:logical}
  L_XH_Z^T=0,\quad L_ZH_X^T=0,\quad L_XL_Z^T=I_k,
\end{equation}
where $I_k$ is a $k\times k$ identity matrix.

Physically, a quantum code operates in a Hilbert space
${\cal H}_q^{\otimes n}$ associated with $n$ quantum-mechanical
systems, Galois-qudits\cite{eczoo-galois-qudits}, with $q$ states
each, and a well defined basis of $X$ and $Z$ operators acting in
${\cal H}_q^{\otimes
  n}$\cite{Ketkar-Klappenecker-Kumar-Sarvepalli-2006}.  Elements of
the codes $C_{H_X}$ and $C_{H_Z}$ correspond to $X$- and $Z$-
operators in the stabilizer group $\mathcal{S}$ acting in the Hilbert
space.  Generators of ${\cal S}$ must be measured frequently during the
operation of the code; generating matrices $H_X$ and $H_Z$ with
smaller row weights result in codes which are easier to implement in
practice.  Orthogonality condition (\ref{eq:CSS-orthogonality})
ensures that the stabilizer group is abelian.  Non-trivial vectors in
${Q}_X$ and ${Q}_Z$ correspond to $X$ and $Z$ logical operators,
respectively.  Codes with larger distances have logical operators
which involve more qudits; such codes typically give better
protection.

More generally, a CSS \emph{subsystem}
code\cite{Poulin-subs-2005,Bacon-subs-2006}
\begin{equation}
\css(G_X,G_Z)=Q_X\oplus Q_Z\label{eq:subsystem-CSS}
\end{equation}
can be defined by two $n$-column gauge generator matrices $G_X$
and $G_Z$ whose rows are not necessarily orthogonal.  Such a code can
be constructed from a regular CSS code (\ref{eq:css-code}) of
dimension $k_{\rm orig}=k+p_\star$ by selecting $p_\star\le k_{\rm orig}$ logical
operator pairs and adding the corresponding rows [forming matrices
$L_X'$, $L_Z'$ such that $L_X' (L_Z')^T=I_{p_\star}$] to the rows of the CSS
generator matrices,
\begin{equation}
G_X=U_X{H_X\choose L_X'},\quad G_Z=U_Z{H_Z\choose L_Z'},\label{eq:subsystem-CSS-constr}
\end{equation}
where $U_X$ and $U_Z$ are invertible matrices corresponding to
arbitrary row transformations, and the subscript in $p_\star$ is to
disambiguate with the field characteristic $p$.  Respectively, for a
CSS subsystem code,
\begin{equation}
k=n-\rank G_X-\rank G_Z+\rank (G_X G_Z^T),\label{eq:subsystem-k}
\end{equation}
and its distance, e.g., for the subcode $Q_Z$,
\begin{equation}
  \label{eq:subsystem-d}
  d_Z=\min_{\bs c\in C_{H_X}^\perp\setminus C_{G_Z}}\wgt \bs c=
  \min_{\bs c\in C_{G_X}^\perp\setminus C_{G_Z}}\wgt \bs c.
\end{equation}
Prominent example of subsystem codes are erasure codes obtained when
matching sets of columns are removed from the stabilizer generator
matrices $H_X$ and $H_Z$.  Equivalently, with ${\cal I}$ the index set
of the remaining columns, a subsystem erasure code has the punctured
stabilizer group $\mathcal{S}_{\rm p}({\cal I})$.

\section{Two-block CSS codes}
\label{sec:two-block}

Here we discuss general properties of two-block CSS
codes\cite{Kovalev-Pryadko-Hyperbicycle-2013} with generator matrices
in the form
\begin{equation}
  \label{eq:css-blocks}
  H_X=(A,B),\quad H_Z^T={B\choose -A},
\end{equation}
where $A,B\in M_{\ell}(F)$ are square commuting matrices of size
$\ell\times \ell$ with elements in a Galois field $F$.  The
commutativity is important, since it guarantees the CSS orthogonality
condition (\ref{eq:CSS-orthogonality}).

An important tool in analyzing the parameters of such codes will be
the subsystem \emph{block-erasure} code $\css(A,B^T)$ and its
CSS-dual, obtained by erasing the qudits in the right
and left blocks, respectively.  We will denote the common parameters
of these codes as
\begin{equation}
  \label{eq:subsystem-AB}
  [[\ell,k_{\rm S},d_{\rm S}]]_q,\quad\text{and}\quad p_\star\equiv\rank (AB),
\end{equation}
with $p_\star$ the number of gauge qudits,
cf.~Eqs.~(\ref{eq:subsystem-CSS-constr}) and (\ref{eq:subsystem-k}).%

\subsection{Code dimension}
\label{sec:tb-dimension}
Given a square matrix $A\in M_\ell(F)$ of size
$\ell$ with elements in the Galois field $F$, consider
size-$\ell$ idempotent matrices $E_A$ and $F_A$
of the same rank as $A$, such that
\begin{equation}
  \label{eq:idempotent-EA-FA}
  E_A^2=E_A,\quad F_A^2=F_A,\quad E_AA=AF_A=A.
\end{equation}
While such matrices are not unique, they can always be constructed
from the decomposition $A=U_AD_AV_A$, where $U_A,V_A\in M_\ell(F)$
% and $V_A\in M_\ell(F)$
are square invertible matrices, and
$D_A=\diag(1,\ldots,1,0,\ldots,0)\in M_\ell(F)$ has exactly $\rank A$
non-zero elements along the diagonal.  Namely, we may choose
\begin{equation}
  E_A\equiv U_AD_AU_A^{-1},\quad  F_A\equiv V_A^{-1}D_AV_A.\label{eq:EA-FA-matrices}
\end{equation}
With idempotent matrices (\ref{eq:idempotent-EA-FA}), it is now easy
to calculate the ranks of matrices (\ref{eq:css-blocks}).  Indeed, row
and column transformations give (this is a simplified version of more
general expressions in Ref.~\onlinecite{Meyer-1970,Meyer-1973})
\begin{eqnarray}
  \nonumber
  \rank H_X&=&\rank \left(
               \begin{array}[c]{cc}
                  A&E_AB\\ 0&(I-E_A)B
               \end{array} \right)\\    \nonumber
           &=&\rank(A)+\rank (I-E_A)B,  \\
            &=&\rank A+\rank B-\rank (E_AB),
               \label{eq:rankHx}
\end{eqnarray}
where we also expressed $\rank B$ with the help of a similar
decomposition, $\rank B=\rank (E_AB)+\rank(I-E_A)B$.  Similarly, for
the other matrix we get
\begin{equation}
  \rank H_Z= %\rank A+\rank B(I-F_A).
  %%% \\      &=&
  \rank A+\rank B-\rank (BF_A).
  \label{eq:rankHz}
\end{equation}
We have, e.g., $\rank E_AB\ge \rank E_ABA=\rank AB=p_\star$.  For a given
set of idempotent matrices (\ref{eq:idempotent-EA-FA}), introduce
non-negative \emph{rank defect} parameters $\delta_X\ge 0$ and $\delta_Z\ge0$,
\begin{equation}
  \label{eq:rank-defect}
  \rank E_AB%=\rank E_BA
  \equiv p_\star+\delta_X, \quad
  \rank BF_A%=\rank AF_B
  \equiv   p_\star+\delta_Z,
\end{equation}
where $ p_\star\equiv \rank AB$ is the number of gauge qubits in the
subsystem code $\css(A,B^T)$, see Eq.~(\ref{eq:subsystem-AB}).  While
rank defects are introduced with respect to a specific set of
idempotents $E_A$ and $F_A$, Eqs.~(\ref{eq:rankHx}) and
(\ref{eq:rankHz}) guarantee that they are, in fact, independent of the
choice of idempotents in Eq.~(\ref{eq:idempotent-EA-FA}).  Moreover,
the same parameters can be also introduced in terms of similarly
defined idempotent matrices associated with the matrix $B$,
\begin{equation}
  \label{eq:rank-defect-B}
  \rank E_BA= p_\star+\delta_X, \quad
  \rank AF_B=  p_\star+\delta_Z.
\end{equation}
Physically, $\delta_X$ and $\delta_Z$ are the numbers of rows in $H_X$
and $H_Z$, respectively, which give non-trivial linearly-independent
contributions to the centers of both ${\cal S}[{\cal I}_L]$ and
${\cal S}[{\cal I}_R]$, the stabilizer group punctured to
individual blocks.  Combining the obtained expressions
(\ref{eq:rankHx}) and (\ref{eq:rankHz}) with
Eq.~(\ref{eq:subsystem-k}) and the definitions (\ref{eq:rank-defect}),
we have
\begin{equation}
  \label{eq:rank}
  \rank H_X=\ell-k_{\rm S}-\delta_X,\quad
  \rank H_Z=\ell-k_{\rm S}-\delta_Z,
\end{equation}
which gives for the original two-block code (\ref{eq:css-blocks}),
\begin{equation}
k=2k_{\rm S}+\delta_X+\delta_Z. \label{eq:two-block-k}
\end{equation}
Most generally, $\delta_X\neq \delta_Z$, and these parameters are
non-negative.  However, a rank defect is guaranteed to vanish with an
additional commutativity condition (see Appendix \ref{sec:proofs-III}
for the proofs):
\begin{statement}
  \label{th:rank-defect-zero}
  If $E_A$ commutes with $B$, then $\delta_X=0$.  Similarly, if $F_A$
  commutes with $B$, then $\delta_Z=0$.
\end{statement}
A similar statement is also valid in terms of the idempotents $E_B$
and $F_B$, e.g., $\delta_X=0$ if $E_BA=AE_B$.

Another special case is when there exists an invertible matrix $S$
which can simultaneously transform both matrices $A$ and $B$ into
their transposed,
\begin{equation}
SAS^{-1}=A^T,\quad SBS^{-1}=B^T.\label{eq:sim-transpose}
\end{equation}
Here, with any choice of $E_A$, we can take $F_A^T=S E_AS^{-1}$,
which, with Eq.~(\ref{eq:rank-defect}), immediately gives
$\delta_X=\delta_Z$, not necessarily zero.  This gives
\begin{statement}
  \label{th:rank-defect-equal}
  If both matrices $A$ and $B$ can be simultaneously transformed into
  their respective transposed, see Eq.~(\ref{eq:sim-transpose}), then
  $\delta_X=\delta_Z$.
\end{statement}
While the condition may appear unnatural, as we discuss below, it is
satisfied for abelian 2BGA codes\cite{Kalachev-Panteleev-2020}.

To summarize this section, most generally, $\delta_X\neq \delta_Z$,
and $\rank H_X\neq \rank H_Z$, so that the dimension of a two-block
code does not have a particular parity.  However, under conditions of
Statement \ref{th:rank-defect-zero} or Statement
\ref{th:rank-defect-equal}, we get $\rank H_X=\rank H_Z$, and code
dimension $k$ even.  Furthermore, under conditions of Statement
\ref{th:rank-defect-zero} we have $\delta_X=\delta_Z=0$, so that
$k=2k_{\rm S}$, exactly twice the dimension of the block-erasure
subsystem code $\css(A,B^T)$.

\subsection{Upper distance bounds}
\label{sec:tb-upper-d}
The same idempotent matrices can be used to analyze the structure of
the codewords.  Most generally, one can expect a given non-trivial
codeword $\bs c_Z\equiv {\bs u\choose \bs v}$ either to be equivalent
to such a codeword with only one of the components non-zero, or not,
in which case any equivalent codeword has both $\bs u$ and $\bs v$
non-zero.  Unlike the cases of HP or GB
codes\cite{Tillich-Zemor-2009,Wang-Pryadko-2022}, for two-block codes
with $\delta_X>0$ or $\delta_Z>0$, it is not possible to choose a full
set of mutually non-degenerate and independent codewords in the former
class.  Nevertheless, the corresponding projections can be used to
construct upper bounds on the distances.  Specifically, consider two
reduced-dimension codes $Q_\mu'\equiv \css(H_X^{(\mu)},H_Z)$,
$\mu\in \{L,R\}$, with
\begin{equation}%\scriptstyle
  \label{eq:mat-HXi}
  H_X^{(L)}=\left(
    \begin{array}[c]{cc}
      A&B\\ 0&I-E_A
    \end{array}
  \right),\;\;
  H_X^{(R)}=\left(
    \begin{array}[c]{cc}
      A&B\\ I-E_B&0
    \end{array}
  \right),
\end{equation}
where additional rows guarantee that $Z$-logical operators can be
chosen to be supported on one block only, two single-block
$Z$-shortened codes, $Q_L''\equiv\css\biglb(A,(H_Z)_L\bigrb)$ and
$Q_R''\equiv\css\biglb(B,(H_Z)_R\bigrb)$, with
\begin{equation}
  \label{eq:mat-HZi-shortened}
  (H_Z)_L^T=B(I-F_A),\quad (H_Z)_R^T=A(I-F_B),
\end{equation}
and two classical codes $C_L$, $C_R$ with parity check matrices, respectively,
\begin{equation}
  \label{eq:mat-HL-HR}
  H_L\equiv   {A\choose E_B},\quad
  H_R\equiv   {B\choose E_A}.
\end{equation}
As detailed in Appendix \ref{sec:proofs-III}, for a chosen
$\mu\in \{L,R\}$, these definitions correspond to a series of
subsequent restrictions on $Z$ codewords, and we get:
\begin{statement}
  \label{th:d-upper-chain}
  For a given two-block code $Q$ and a chosen $\mu\in \{L,R\}$,
  consider quantum codes $Q_\mu'$ and $Q_\mu''$, and a classical code
  $C_\mu$.  Distances of these codes satisfy
  \begin{equation}
    \label{eq:d-upper-chain}
    d_Z\equiv d_Z(Q)\stackrel{({\rm a})}{\le} d_Z(Q_\mu')
    \stackrel{({\rm b})}{\le } d_Z(Q_\mu'') \stackrel{({\rm c})}{\le} d(C_\mu).
  \end{equation}
\end{statement}
This implies the inequality $d_Z\le d_Z(Q_\mu'')$, a special case of
$Z$-shortening lemma from Ref.~\onlinecite{Zeng-Pryadko-hprod-2020}.

A particularly simple upper bound for the distance $d_Z(Q_L'')$, and
thus for the distance of the original two-block code
(\ref{eq:css-blocks}), is obtained when matrix $A$ is block-diagonal
(see the proof in Appendix \ref{sec:proofs-III}):
\begin{statement}
  \label{th:upper-d-block-diagonal}
  Suppose matrix $A$ is block-diagonal with the maximum block size
  $m$, and the code $Q_L''$ is non-trivial, $k(Q_L'')>0$.  Then the
  distance $d_Z(Q_L'')\le m$.
\end{statement}
Evidently, when matrix $B$ is block-diagonal, a similar bound also
exists for $d_Z(Q_R'')$.

\subsection{Lower distance bounds}
\label{sec:tb-lower-d}
Best known are the usual CSS
bounds,
\begin{equation}
  \label{eq:lower-d-CSS-bound}
  d_Z\ge d(C_{H_X}^\perp), \quad d_X\ge d(C_{H_Z}^\perp).
\end{equation}
However, since the rows of $H_X$ and $H_Z$ are mutually orthogonal, we
have, e.g., $d(C_{H_X}^\perp)\le d(C_{H_Z})\le W_Z$, the minimum row
weight of the matrix $H_Z$.  Since our main interest is in
highly-degenerate quantum LDPC codes with bounded stabilizer weights
and diverging distances, the CSS bounds (\ref{eq:lower-d-CSS-bound})
are not very useful.

Here we construct lower bounds for the distance in terms of the
distances of single-block codes.  It is easy to see that the
$Z$-punctured stabilizer codes
\begin{equation}
  \css\biglb((1-E_B)A,B^T\bigrb) \text{ and }
  \css\biglb((1-E_A)B,A^T\bigrb)\label{eq:Z-punctured}
\end{equation}
both have the dimension $k_{\rm S}+\delta_X$.  In the special case
$\delta_X=0$, this is the same as for single-block erasure subsystem
codes (\ref{eq:subsystem-AB}), and the $Z$-distances are also the same
as $d_Z(A,B^T)$ and $d_Z(B,A^T)$, respectively.  This condition also
guarantees that any non-trivial $Z$-codeword in one of these codes
becomes a non-trivial codeword in the original two-block code after it
is padded with zeros.  The additional condition $\delta_Z=0$
guarantees that the full set of linearly-independent $Z$-codewords of
the two-block code can be constructed this way, which coincides with
the condition of $Z$-puncturing lemma from
Ref.~\onlinecite{Zeng-Pryadko-hprod-2020}.  With the help of the fact
that the two single-block erasure codes with parameters
(\ref{eq:subsystem-AB}) are related by CSS conjugation, we obtain a
simple lower distance bound
\begin{statement}
  \label{th:d-lower-puncturing}
  Suppose both rank defects in Eq.~(\ref{eq:rank-defect}) are zero,
  $\delta_X=\delta_Z=0$.  Then,
\begin{equation}
  \label{eq:d-lower-puncturing}
 d\ge d_{\rm S},\quad d_{\rm S}\equiv d(A,B^T).
\end{equation}
\end{statement}
We notice that under the conditions of Statement
\ref{th:d-lower-puncturing}, e.g., the $Z$-shortened code $Q_L''$ in
Statement \ref{th:d-upper-chain} can also be seen as a gauge-fixed
subsystem code $\css(A,B^T)$.  However, this particular gauge fixing
may result in the increased $d_Z$.  Therefore, we should not
necessarily expect the lower bound (\ref{eq:d-lower-puncturing}) to
saturate, except when $p=0$, or, equivalently, $AB=0$, in which case
the erasure code $\css(A,B^T)$ is also a stabilizer code.

We finish this section with two more general expressions relating the
distance $d_Z$ of a two-block code with those of auxiliary quantum
codes of smaller dimension.  Namely, any non-trivial vector
$\bs c_Z={\bs u\choose \bs v}$ is either degenerate to a solution with
$\bs u$ non-zero and $\bs v=0$, or to a solution with any $\bs u$ and
$\bs v\ncong 0$, but not both.  This and an equivalent construction
with $\bs u$ and $\bs v$ interchanged give two generalizations of
Statement 1 from Ref.~\onlinecite{Wang-Pryadko-2022},
\begin{equation}
  \label{eq:dZ-quantum-identity-A}
  d_Z=\min \left\{ d_Z(H_X^{(\mu)},H_Z),d_Z(H_X,H_Z^{(\mu)})\right\},
  %%% \\
  %%% \label{eq:dZ-quantum-identity-B}
  %%% d_Z&=&\min \left\{ d_Z(H_X^{(R)},H_Z),d_Z(H_X,H_Z^{(R)})\right\},
\end{equation}
with a $\mu\in\{L,R\}$, and matrices in Eqs.~(\ref{eq:css-blocks}),
(\ref{eq:mat-HXi}), and
\begin{equation}%\scriptstyle
  \label{eq:Hzpp-matrix}
  H_Z^{(L)}=\left(\begin{array}[c]{cc}B&I-F_A\\ -A&0 \end{array}\right)^T\!\!\!,\;\,\,
  H_Z^{(R)}=\left(\begin{array}[c]{cc}B&0\\ -A&I-F_B \end{array}\right)^T\!\!\!.
\end{equation}
Even though Eq.~(\ref{eq:dZ-quantum-identity-A}) relates the distance
of the original quantum code to those of two other quantum codes with
the same block size, they may still be useful, since the two codes
have half as many basis vectors and exponentially fewer vectors at
large $k$, $q^{k/2}\ll q^k$.

\section{2BGA codes: construction and general properties}
\label{sec:construction}
\subsection{Definition}
2BGA codes are a special case of two-block codes (\ref{eq:css-blocks})
where the commuting matrices are constructed with the help of a group
algebra.  Alternatively, 2BGA codes are a version of GB codes where a
cyclic group is replaced with a general group $G$.  They can also be
thought of as the smallest LP codes.

Given two elements $a,b\in F[G]$ of the group algebra $F[G]$, see
Eq.~(\ref{eq:algebra-element}), with the group size $\ell\equiv |G|$,
the $\ell\times\ell$ matrices $A\equiv \LL(a)$ and
$B\equiv \RR(b)$, respectively, are defined by the left and right
action on group elements,
\begin{equation}
  \label{eq:L-R-action}
  [\LL(a)]_{\alpha,\beta}\equiv \sum_{g\in G}a_g\delta_{\alpha,g\beta},\quad
  [\RR(b)]_{\alpha,\beta}\equiv \sum_{g\in G}b_g\delta_{\alpha,\beta g},
\end{equation}
where group elements $\alpha,\beta\in G$ are used to index rows and
columns, and $\delta_{\alpha,\beta}=1$ if $\alpha=\beta$ and $0$
otherwise is the Kronecker delta.  It is easy to verify that for group
elements $g\in G$, matrices $\LL(g)$ form the regular $F$-linear
representation of $G$.  Further, for any $a,b\in F[G]$,
$\LL(a) \LL(b)=\LL(ab)$, $\RR(a)\RR(b)=\RR(ba)$, while any two
matrices from different sets commute with each
other\cite{Panteleev-Kalachev-2021}, $\LL(a)\RR(b)=\RR(b)\LL(a)$; it
is the latter property that gives the CSS orthogonality condition
(\ref{eq:CSS-orthogonality}).  The map between $\LL(a)$ and $\RR(b)$
can be given in terms of the permutation matrix $P$ with components
$P_{\alpha,\beta}\equiv\delta_{\alpha,\beta^{-1}}$,
$\alpha,\beta\in G$,
\begin{equation}
  \label{eq:L-R-map}
  \LL(a)=P [\RR(a)]^T P,
\end{equation}
where $[\,\cdot\,]^T$ denotes matrix transposition.  The symmetric
permutation operator $P$ acting in $F^\ell$ is equivalent to the map
$\,\bs{\widehat{\ }}:F[G]\to F[G]$ in Eq.~(\ref{eq:hat-operator}),
$P\bs a=\widehat{\bs a}$.  It is also easy to verify that for all
$ a\in F[G]$,
\begin{equation}
  \label{eq:transposition}
  [\LL(a)]^T=\LL(\widehat a),\quad   [\RR(a)]^T=\RR(\widehat a).
\end{equation}

In the following, $\lp[a,b]$ denotes the 2BGA code constructed from
group algebra elements $a,b\in F[G]$, the CSS code
(\ref{eq:css-blocks}) with $A\equiv \LL(a)$ and $B\equiv \RR(b)$ given
by Eq.~(\ref{eq:L-R-action}).  This notation refers to more general LP
codes\cite{Panteleev-Kalachev-2021}, defined in terms of a pair of
matrices with elements in $F[G]$.  Namely, 2BGA codes are a degenerate
case of LP codes with both matrices of dimension $1\times 1$.
Previously considered special cases are GB
codes\cite{Kovalev-Pryadko-Hyperbicycle-2013,%
  Panteleev-Kalachev-2019,Wang-Pryadko-2022}, with $G$ a cyclic group,
and \emph{abelian} 2BGA codes\cite{Kalachev-Panteleev-2020}, with $G$
an abelian group.

\subsection{Code equivalence}
The complexity of enumerating 2BGA codes can be significantly reduced
by excluding permutation-equivalent codes (the proof is given in
Appendix \ref{sec:proofs-IV}):
\begin{theorem}
  \label{th:permutation-equiv}
  For any $a,b\in F[G]$, the 2BGA code $\lp[a,b]$ is equivalent
  to
  \begin{enumerate}[label={\rm (\roman*)}]
  \item \label{theorem3:1}$\lp[\varphi(a),\varphi(b)]$, for any automorphism
    $\varphi:G\to G$;
  \item \label{theorem3:2}$\lp[\alpha^{-1}a\alpha,\beta^{-1}b\beta]$, for any $\alpha,\beta\in G$;
  \item \label{theorem3:3}$\lp[xa,yb]$, for any non-zero $x,y\in F$.
  \item \label{theorem3:4}$\lp[a\alpha ,\beta b]$, for any $\alpha,\beta\in G$;
  \item \label{theorem3:5}$\lp[\widehat b,\widehat a]$;
  \item \label{theorem3:6}In addition, the code  \emph{CSS-dual} to $\lp[a,b]$, with
    interchanged $H_X$ and $H_Z$ matrices, is permutation-equivalent
    to $\lp[\widehat a,-\widehat b]\cong \lp[b,a]$.
  \end{enumerate}
\end{theorem}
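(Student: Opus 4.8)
The plan is to realize every claimed equivalence as a qudit permutation $\Pi$ acting on the $2\ell$ columns of the stabilizer matrices (grouped into the two blocks of Eq.~(\ref{eq:css-blocks})), together with invertible row operations $U_X$ and $U_Z$ applied to $H_X$ and $H_Z$. Since row operations leave the row spaces, and hence the CSS code, unchanged, it suffices to exhibit $\Pi$ and check that $U_X H_X \Pi$ and $U_Z H_Z \Pi$ reproduce the stabilizer matrices of the target code. The only part I would treat differently is \ref{theorem3:3}, which is a monomial (local-rescaling) rather than a pure permutation equivalence: rescaling the left-block qudits by $x$ and the right-block qudits by $y$ sends $H_X=(A,B)$ to $(xA,yB)=(\LL(xa),\RR(yb))$ exactly, while the inverse rescalings on the $Z$-side send $H_Z=(B^T,-A^T)$ to $(x^{-1}B^T,-y^{-1}A^T)=(xy)^{-1}(yB^T,-xA^T)$, which is a scalar multiple of the target and hence row-equivalent to it.

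The workhorse for \ref{theorem3:1} and \ref{theorem3:2} is that conjugating \emph{both} blocks by a single permutation matrix $S$ always yields a permutation-equivalent code: the maps $A\mapsto S^{-1}AS$, $B\mapsto S^{-1}BS$ are effected by $\Pi=\diag(S,S)$ with $U_X=U_Z=S^{-1}$, where the $H_Z$ computation uses $S^T=S^{-1}$. For \ref{theorem3:1} I would take $S=\Phi^{-1}$ with $\Phi$ the permutation matrix of the automorphism $\varphi$; a one-line index computation from Eq.~(\ref{eq:L-R-action}) gives $\Phi\LL(a)\Phi^{-1}=\LL(\varphi(a))$ and likewise for $\RR$. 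For \ref{theorem3:2} I would take $S=\LL(\alpha)\RR(\beta^{-1})$, itself a permutation matrix: using $\LL(a)\LL(b)=\LL(ab)$, $\RR(a)\RR(b)=\RR(ba)$, and $\LL(a)\RR(b)=\RR(b)\LL(a)$, the factor $\LL(\alpha)$ conjugates $A$ to $\LL(\alpha^{-1}a\alpha)$ while leaving $B$ fixed, and $\RR(\beta^{-1})$ conjugates $B$ to $\RR(\beta^{-1}b\beta)$ while leaving $A$ fixed, so $S^{-1}AS=\LL(\alpha^{-1}a\alpha)$ and $S^{-1}BS=\RR(\beta^{-1}b\beta)$ simultaneously; the commutation $\LL(a)\RR(b)=\RR(b)\LL(a)$ is precisely what lets the two independent conjugations coexist.

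The genuinely two-sided cases are \ref{theorem3:4} and \ref{theorem3:5}, where the two blocks must be permuted differently and one must check that a single row operation corrects both blocks at once. For \ref{theorem3:4} I would use the block-diagonal $\Pi=\diag(\LL(\alpha),\RR(\beta))$, which sends $H_X=(A,B)$ directly to $(\LL(a\alpha),\RR(\beta b))$ via $\LL(a\alpha)=A\LL(\alpha)$ and $\RR(\beta b)=B\RR(\beta)$; the content is that the \emph{same} $\Pi$ acting on $H_Z=(B^T,-A^T)$ is undone by the single row operation $U_Z=\RR(\beta^{-1})\LL(\alpha^{-1})$, which I would verify using $B^T=\RR(\widehat b)$ and $A^T=\LL(\widehat a)$ from Eq.~(\ref{eq:transposition}) together with $\LL(a)\RR(b)=\RR(b)\LL(a)$. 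For \ref{theorem3:5} I would use the block-swapping $\Pi=\left(\begin{smallmatrix}0&P\\P&0\end{smallmatrix}\right)$ built from the inversion permutation $P$ of Eq.~(\ref{eq:L-R-map}); since $P^2=I$, $P^T=P$, and since $\LL(\widehat b)=PBP$, $\RR(\widehat a)=PAP$ follow by combining Eqs.~(\ref{eq:L-R-map}) and (\ref{eq:transposition}), the row operations $U_X=P$ and $U_Z=-P$ reproduce the stabilizer matrices of $\lp[\widehat b,\widehat a]$.

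Finally, for \ref{theorem3:6} the CSS-dual has stabilizer matrices $(H_Z,H_X)$; applying the plain block swap $\Pi=\left(\begin{smallmatrix}0&I\\I&0\end{smallmatrix}\right)$ with $U_X=U_Z=-I$ turns these into $H_X=(A^T,-B^T)$ and $H_Z=(-B,-A)$, which are exactly the matrices of $\lp[\widehat a,-\widehat b]$ once one identifies $A^T=\LL(\widehat a)$ and $B^T=\RR(\widehat b)$ through Eq.~(\ref{eq:transposition}). The remaining identification $\lp[\widehat a,-\widehat b]\cong\lp[b,a]$ then follows by composing \ref{theorem3:5} with the rescaling $y=-1$ of \ref{theorem3:3}. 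The main obstacle throughout is the two-sided bookkeeping in \ref{theorem3:4} and \ref{theorem3:5}: because a row operation on $H_Z$ necessarily acts on both blocks simultaneously, the blocks cannot be adjusted independently, and the existence of a single compensating $U_Z$ rests entirely on the commutation $\LL(a)\RR(b)=\RR(b)\LL(a)$ and the transpose identities of Eq.~(\ref{eq:transposition}); I expect establishing and correctly applying those identities to be where the real work lies.
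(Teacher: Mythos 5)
Your proposal is correct and follows essentially the same route as the paper's proof: the same key ingredients (the commutation $\LL(a)\RR(b)=\RR(b)\LL(a)$, the transpose identities of Eq.~(\ref{eq:transposition}), the inversion permutation $P$ of Eq.~(\ref{eq:L-R-map}) for item (v), and composing (iii) with (v) for the last identification in (vi)) appear in the same roles, with your choices $S=\LL(\alpha)\RR(\beta^{-1})$, $\Pi=\diag\bigl(\LL(\alpha),\RR(\beta)\bigr)$, and the compensating $U_Z=\RR(\beta^{-1})\LL(\alpha^{-1})$ matching the paper's constructions. The only difference is presentational: you verify the equivalences at the matrix level ($U_XH_X\Pi$, $U_ZH_Z\Pi$), whereas the paper transforms the codewords $\bs c_X$, $\bs c_Z$ directly, including the same scalar rescaling treatment of item (iii).
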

Notice that with $\alpha=\beta$, Theorem
\ref{th:permutation-equiv}(ii) is a special case of (i) for inner
automorphisms.  These, and, more generally, item (ii), would be
trivial for an abelian group.  On the other hand, with an abelian
group $G$, for any $a\in F[G]$, $\LL(a)=\RR(a)$, which also gives
$\lp[a,b]\cong \lp[b,a]$, and an immediate consequence, $d_X=d_Z$.  These
properties need not to be true with a non-abelian group $G$.

\subsection{Connectivity of 2BGA codes}
\label{sec:block-structure}

It is convenient to rewrite the CSS equations, e.g., for a non-trivial
codeword ${\bs c}_Z={\bs u\choose \bs v}\in F^{2\ell}$, in terms of
the corresponding pair of group algebra elements, $u,v\in F[G]$.
Direct calculation gives for a 2BGA code $\lp[a,b]$,
\begin{eqnarray}
  \label{eq:cx-orthogonality} au+vb&=&0,\\ %
  %%% {u'\choose v'}\simeq{u+wb\choose v-aw}
  %%% &\neq& {0\choose 0},\quad \forall w\in F[G],
\left[u+wb, v-aw\right] &\neq& \left[0,0\right],\quad \forall w\in
F[G],
             \label{eq:cx-degeneracy}
\end{eqnarray} where Eq.~(\ref{eq:cx-degeneracy}) enumerates the
degeneracy class.

For a given $a\in F[G]$, consider the subgroup
\begin{equation}
  \label{eq:subgroup-Gx}
  G_a\equiv \left\langle\left\{g\in G:a_g\neq0\right\}\right\rangle,
\end{equation}
the \emph{support group}\cite{Connell-1963} generated by group
elements with non-zero coefficients in $a$,
cf.~Eq.~(\ref{eq:algebra-element}).  Evidently, if we start with any
group element $x\in G$, repeated left multiplication by $a$ can only
generate group algebra elements $ax$, $a^2x$, \ldots, supported on the
left coset $G_ax$ of $x$.  Sizes of left cosets being equal, we have
that the matrix $\LL(a)$ is block-diagonal, with $m_a$ blocks of size
$|G_a|$, where $m_a$ is the index of the support group $G_a$ in $G$,
$m_a \equiv[G:G_a]$.  The same is true for $\RR(b)$, except in this
case we are dealing with the right cosets $xG_b$,
cf.~Eq.~(\ref{eq:cx-orthogonality}), and we may need to order group
elements differently.

Overall, Eq.~(\ref{eq:cx-orthogonality}) implies the block structure
of the code: the row of matrix $H_X$ labeled by the group element
$x\in G$ is in the block associated with the double coset $G_a x G_b$.
Since transposition does not change the support group,
$G_{\widehat a}=G_a$, the same is true for the $x$\,th row of matrix
$H_Z$.  Therefore, if the product of the two subgroups (the double
coset associated with the group identity element $1\in G$) does not
contain all group elements, $G_aG_b\subsetneq G$ (as sets), the code
$\lp[a,b]$ is decomposed into smaller mutually disconnected subcodes
associated with different double cosets in $G_a\backslash G /G_b$.  It
is well known that double cosets do not necessarily have the same
sizes\cite{Bechtell-book-1971}, so the individual double-coset
subcodes are not expected to be equivalent.

To analyze the structure of the matrices in more detail, let us fix an
ordering so that elements of the subgroup $G_a=\{1,g_2,g_3,\ldots\}$
go first in this order, followed by elements of each right coset
$G_ax$, with elements of $G_a$ taken in the same order,
$\{x,g_2x,g_3x,\ldots\}$, and $x\in{\cal A}$, a transversal set of
elements from $G_a\backslash G$ of size $m_a$.  With this choice, it
is easy to see from Eq.~(\ref{eq:L-R-action}) that the $m_a$ blocks of
the matrix $A$ associated with different cosets are identical,
$A=A_1\otimes I_{m_a}$, where $A_1\equiv \LL_{G_a}(a)$, and the
subscript indicates the subgroup that row and column indices are
restricted to.  The same is true for the matrix $B$, except to reveal
the block structure, we may need to take group elements in a different
order.  Denoting the corresponding permutation matrix as $S$, we have
\begin{equation}
  B=S(I_{m_b}\otimes B_1)S^{-1},\quad B_1\equiv \RR_{G_b}(b).
  \label{eq:coset-S-matrix}
\end{equation}
With the ordering of the group $G_a$ fixed, the only remaining freedom
is to order the elements of ${\cal A}$; we can ensure that elements of
each double coset come together, so that decomposition of the 2BGA
code into a direct sum of individual double-coset subcodes be evident.

In general, a (double) coset is not a subgroup of $G$; most of cosets
do not even contain the group identity element.  However, different
double cosets are related to each other by conjugation.  In the case
of support subgroups, we can write $G_axG_b=G_a1G_{xbx^{-1}}$, which
allows to map any double coset to a double coset containing the group
identity element.  The corresponding double-coset subcodes of 2BGA
codes are also related:
\begin{statement}
  \label{th:dbl-coset-codes} A subcode of a disconnected 2BGA code
  $\lp[a,b]$, with some $a,b\in F[G]$, supported in the double coset
  $G_axG_b$, $x\in G$, is equivalent to a subcode of $\lp[a,xbx^{-1}]$
  supported in the double coset $G_a1G_{xbx^{-1}}$.
\end{statement}
In particular, this implies that in the case of an abelian group $G$,
a code equivalent to any double-coset subcode of a 2BGA code
$\lp[a,b]$ over $F[G]$ can be constructed as a 2BGA code over a
subgroup of $G$.  A bit more generally:
\begin{statement}
  \label{th:triple-product}
  If the intersection subgroup $N\equiv G_a\cap G_b$ is abelian and
  normal in both support
  groups, %%$N\triangleleft G_a$ and $N\triangleleft G_b$,
  the subcode of $\lp[a,b]$ supported in the double-coset $G_a1G_b$ is
  equivalent to a 2BGA code over a group $G'$ of rank $|G_a 1 G_b|$.
%%  {\color{red}\bf do we need abelian $N$?}
\end{statement}
In particular, with disjoint subgroups, $G_a\cap G_b=\{1\}$, the group
in Statement \ref{th:triple-product} is just a direct product of the
two subgroups, $G'=G_a\times G_b$.  In this case we can independently
choose the order of elements in each subgroup, and both matrices may
simultaneously have the form of Kronecker products,
$A=A_1\otimes I_{n_b}$, $B=I_{n_a}\otimes B_1$, with
$n_b\equiv |G_b|=m_a$ and $n_a\equiv |G_a|=m_b$.  This is exactly the
block structure of an HP code\cite{Tillich-Zemor-2009}, constructed
from square matrices $A_1$ and $B_1$.  If we denote the parameters of
classical linear codes with parity check matrices $A_1$ and $B_1$,
respectively, as $[n_a,k_a,d_a]_q$ and $[n_b,k_b,d_b]_q$ (these
parameters remain the same when the transposed matrices are used), the
parameters of the quantum HP code are known explicitly,
$[[2n_an_b,2k_ak_b,\min(d_a,d_b)]]_q$.  Additional lower and upper
distance bounds on more general codes under conditions of Statement
\ref{th:triple-product} are discussed in Section
\ref{sec:special-central}.

\subsection{Symmetry group of a 2BGA code}

With Eqs.~(\ref{eq:cx-orthogonality}), (\ref{eq:cx-degeneracy}), it is
easy to check the symmetry of a given 2BGA code.  Indeed, for any
$g\in \CC_G(G_a)$, the centralizer of the subgroup $G_a$ in $G$, if a
pair $[u,v]$ is in the code, then the corresponding left-multiplied
pair $[gu,gv]$ is also in the code.  The same is true for the
right-multiplied pairs $[u h,v h]$, $\forall h\in \CC_G(G_b)$.

For an abelian group $G$, we obtain $G$-symmetric analogs of index-two
qQC codes, quasi-abelian 
codes\cite{Panteleev-Kalachev-2020,Kalachev-Panteleev-2020}.  With a
non-abelian $G$, the overall symmetry group of a 2BGA code is
generally smaller than $G$.  In any case, it includes
$\mathop{\rm Z}(G)$, the center of $G$.

\subsection{Code dimension and related codes}

Since 2BGA codes are a subset of general two-block codes, all general
properties from Sec.~\ref{sec:two-block} apply.
The case of GB codes\cite{Kovalev-Pryadko-Hyperbicycle-2013,%
  Panteleev-Kalachev-2019,Wang-Pryadko-2022} is recovered when $G$ is
a cyclic group,
$$
C_\ell\equiv \langle r| r^\ell=1\rangle = \{1,r,r^2,\ldots, r^{\ell-1}\},
$$
where $r^\ell=1$ is implicit in the set notation.  There is an obvious
one-to-one map between the group algebra $F[C_\ell]$ and the ring of
modular polynomials $F[x]/(x^\ell-1)$.  Then, a 2BGA code $\lp[a,b]$
is also a generalized-bicycle code $\gb[a(x),b(x)]$ specified by
polynomials $a(x)$, $b(x)\in F[x]/(x^\ell-1)$, and the square blocks
in Eq.~(\ref{eq:css-blocks}) are just the circulant matrices $A=a(P)$
and $B=b(P)$, where
\begin{equation}
  \label{eq:permutation-matrix}
  P =
  \begin{pmatrix}%eq2
    0&\ldots&0 &1\\
    1&&&0\\[-0.5em]
    &\ddots &&\vdots\\
    && 1&0
  \end{pmatrix}
\end{equation}
is an $\ell\times\ell$ cyclic permutation matrix.  A simple expression
for the dimension of a code $\gb[a,b]$ was given in
Ref.~\onlinecite{Panteleev-Kalachev-2019}.  In this case,
$\rank H_X=\rank H_Z=\ell-\deg h(x)$, and
\begin{equation}%eq10
  \label{eq:GB-code-size}
  k=2\deg h(x),\quad   h(x)\equiv \gcd\left(a(x),b(x),x^\ell-1\right).
\end{equation}
In fact, $\deg h(x)$ also coincides with the dimension $k_{\rm S}$ of
the quantum cyclic code $\css(A,B^T)$, the single-block subsystem
erasure code (\ref{eq:subsystem-AB}), and Eq.~(\ref{eq:two-block-k})
guarantees that for any cyclic group $G$, $\delta_X=\delta_Z=0$.

More generally, for an abelian group $G$, it is known that
$\rank H_X=\rank H_Z$, and the code dimension is
even\cite{Kalachev-Panteleev-2020}.  An equivalent statement,
$\delta_X=\delta_Z$, also follows from Statement
\ref{th:rank-defect-equal} and Eq.~(\ref{eq:L-R-map}), if we remember
that in the case of an abelian group $G$, for any $a\in F[G]$,
$\LL(a)=\RR(a)$.  A stronger statement can be made whenever a 2BGA
code can be decomposed as a direct sum of GB or HP codes, e.g., under
conditions of Statement \ref{th:triple-product}:
\begin{statement}
  \label{th:abelian}
  Consider a code $\lp[a,b]$ with $a,b\in F[G]$.  If the intersection
  subgroup $N\equiv G_a\cap G_b$ is abelian and normal in both support
  groups, the rank defects of the corresponding CSS matrices vanish,
  $\delta_X=\delta_Z=0$.
\end{statement}
In particular, $\delta_X=\delta_Z=0$
%whenever, e.g., $G_a\le G$ is abelian,
%and, of course,
for any abelian 2BGA code.

An alternative sufficient condition follows from Statement
\ref{th:rank-defect-zero} in the special case of a \emph{semisimple}
group algebra $F[G]$, i.e., when the field characteristic $p$ and the
group rank $\ell$ are mutually prime.  Indeed, any ideal in a
semisimple group algebra is a summand, and for any $a\in F[G]$, there
exist idempotent elements $e_a,f_a\in F[G]$ such that $e_a^2=e_a$,
$f_a^2=f_a$, and $e_aa=a$, $a f_a=a$.  In this case, we can choose
$E_A=\LL(e_a)$, $F_A=\LL(f_a)$, which are guaranteed to commute with
$B\equiv \RR(b)$.  A bit of thought gives a more general sufficient
condition:
\begin{statement}
  \label{th:semisimple}
  Consider a code $\lp[a,b]$ over group algebra $R\equiv F[G]$ such
  that the ideals $a R$ and $R a$ (or the two ideals
  generated by $b$) be semisimple.  Then, rank defects of the
  corresponding CSS matrices vanish, $\delta_X=\delta_Z=0$.
\end{statement}
Somewhat less general but easier to apply condition is, e.g., that the
group algebra $F[G_a]$ be semisimple, i.e., rank of the support group
$G_a$ be mutually prime with the field characteristic $p$.

The \emph{semi-abelian} 2BGA codes whose CSS generator matrices have
the property $\delta_X=\delta_Z=0$ are special: their codeword basis
can be chosen so that each codeword is supported on only one block,
similarly to GB codes\cite{Kovalev-Pryadko-Hyperbicycle-2013,%
  Panteleev-Kalachev-2019,Wang-Pryadko-2022} and HP
codes\cite{Tillich-Zemor-2009,Zeng-Pryadko-2018,%
  Zeng-Pryadko-hprod-2020}.  In particular, this gives a lower
distance bound (\ref{eq:d-lower-puncturing}) in terms of the
single-block erasure code, and guarantees the condition of Statement
\ref{th:upper-d-block-diagonal}, giving a simple upper bound on the
distance in terms of matrix block sizes, $d\le\min(|G_a|,|G_b|)$.

However, not all 2BGA codes have this property.  In particular, there
exist \emph{essentially non-abelian} 2BGA codes where
$\delta_X\neq \delta_Z$ or $\delta_X=\delta_Z\neq 0$.
\begin{example}
  Consider the alternating group $A_4$, also known as the rotation
  group $T$ of a regular tetrahedron,
  $$T=\langle x,y|x^3=(yx)^3=y^2=1\rangle,\quad |T|=12,$$ and the
  binary algebra $\mathbb{F}_2[T]$.  Select $a=1+x+y+x^{-1}yx$
  and $b=1+x+y+yx$ to get a 2BGA code
  $\lp[a,b]$ with parameters $[[24,5,3]]_2$.
\end{example}

\subsection{The case of quasi-abelian lifted-product codes}
\label{sec:special-central}

Here we consider in more detail the special case of codes in
Statements \ref{th:triple-product} and \ref{th:abelian}, 2BGA codes
$\lp[a,b]$, with $a,b\in F[G]$ such that the support groups $G_a$ and
$G_b$ have abelian intersection group $N\equiv G_a\cap G_b$ normal
both in $G_a$ and $G_b$.  As discussed, such codes can be seen as
$F$-linear \emph{quasi-abelian} LP codes, or, equivalently, as HP
codes over the abelian group algebra $F[N]$.  Their structure and
parameters can be analyzed using the techniques specific to such
codes.

The following lower and upper bounds are constructed by analogy with
the corresponding theorems from
Ref.~\onlinecite{Kovalev-Pryadko-Hyperbicycle-2013}:

\begin{statement}[Version of Theorem 5 from
  Ref.~\onlinecite{Kovalev-Pryadko-Hyperbicycle-2013}]
  \label{th:lower-d-central-intersection}
  Given elements $a,b\in F[G]$ such that the intersection subgroup
  $N \equiv G_a \cap G_b$ of rank $c$ is abelian and normal in both support
  groups, let $d_A^\perp$ and $d_B^\perp$ be the distances of
  classical $F$-linear group algebra codes with parity check matrices
  $A=\LL(a)$ and $B=\RR(b)$.  Then the distance $d_Z$ of the code
  $\lp[a,b]$ satisfies
  $d_Z\ge d_0\equiv \lceil \min(d^\perp_A,d^\perp_B)/c\rceil$.
\end{statement}

To get a matching upper bound, we need an additional condition to
ensure that, e.g., vectors in $C_A^\perp$ have vectors matching by
symmetry in $C_B^\perp$ to form non-trivial GB codes, see
Eq.~(\ref{eq:gb-decomposition}) in the proof of Statement
\ref{th:abelian}.  It is implicit in the decomposition
(\ref{eq:gb-decomposition}) that we can characterize the symmetry by
ideals of $F[N]$.

Let $J$ be a \emph{maximal} ideal in $F[N]$, and
$\overline {J}= \langle G_a\,{I}\, G_b\rangle$ its extension to the
subspace associated with the coset $G_a1G_b$.  Namely, if
$\mathcal{A}$ and $\mathcal{B}$, respectively, are transversal sets of
representatives from $G_a/N$ and $N\backslash G_b$, every element
$x\in \overline{J}$ can be uniquely written as
\begin{equation}
x=\sum_{\alpha\in\cal A}\sum_{\beta\in\cal B}
\alpha\, x_{\alpha,\beta}\,\beta,\quad x_{\alpha,\beta}\in {J}.\label{eq:dbl-coset-decomposition}
\end{equation}
The corresponding code $C_{J}$, the {\em two-sided coset code\/}
generated by ${J}$, is simply a set of vectors in $F^\ell$
corresponding to elements of $\overline{J}$.  The proof of the
following upper bound is based on the fact that the product of any two
non-zero elements in a maximum ideal is non-zero.
\begin{statement}[Version of Theorem 6 from
  Ref.~\onlinecite{Kovalev-Pryadko-Hyperbicycle-2013}]
  \label{th:upper-d-central-intersection}
  Let ${J}$ be a maximal ideal in $F[N]$, $C_{J}$ the
  two-sided coset code generated by ${J}$, and
  $\widehat{C}_{J}\equiv P C_{J}$ its image under the linear
  map (\ref{eq:hat-operator}).  Denote $d'$ the distance of the
  subcode $C_A^\perp\cap C_{J}$.  Then, if
  $C_{B^T}^\perp \cap \widehat{C}_{J}\neq \{0\}$, the distance of
  the 2BGA code $\lp[a,b]$ satisfies the upper bound, $d_Z\le d'$.
\end{statement}

Evidently, there is also an upper bound in terms of the distance of
the subcode $C_B^\perp\cap C_{J}$.

If we denote the indices of $N$ in the two support groups as
$\ell_a\equiv [G_a:N]$ and $\ell_b\equiv [G_b:N]$, as discussed in
Sec.~\ref{sec:block-structure}, matrices $A$ and $B$ have blocks of
size $c\ell_a$ and $c\ell_b$, respectively.  Then, for a non-trivial
2BGA code, the parameter $d_0$ in Statement
\ref{th:lower-d-central-intersection} satisfies
$d_0\le \min(\ell_a,\ell_b)$, while the upper bounds guarantee
$d_Z\le c\min(\ell_a,\ell_b)$, as would also be expected from
Statement \ref{th:upper-d-block-diagonal}.

The upper and the lower bounds on $d_Z$ coincide when $c=1$: in this
case the subgroup $N=\{1\}$ is trivial so that $F[N]$ is just the
field $F$, and the auxiliary codes in statements
\ref{th:lower-d-central-intersection} and
\ref{th:upper-d-central-intersection} coincide, which gives
$d_Z=\min(d_A^\perp,d_B^\perp)$.  Of course, the same result for the
distance can be also obtained from the map to a hypergraph-product
code constructed from the single-block classical group algebra codes
with groups $G_a$ and $G_b$.

It is known\cite{Borello-delaCruz-Willems-2022} that classical group
algebra codes include good codes with finite rates and finite relative
distances.  This guarantees the existence of finite-rate 2BGA codes
with distance scaling as a square root of block length.

\subsection{2BGA codes with row weights $W\le 4$.}
\label{sec:special-W4}

Let us discuss 2BGA codes with small row weights
$W\equiv \wgt(a)+\wgt(b)$.  Evidently, a code $\lp[a,b]$ with, e.g.,
$a=0$ has one block zero, which immediately gives $d_X=d_Z=1$, as long
as the code is non-trivial.  Similarly, according to Theorem
\ref{th:permutation-equiv}, any group algebra element with $\wgt(a)=1$
can be equivalently replaced with $a=1$, giving $A=I_\ell$, the
identity matrix, which gives $\rank H_X=\rank H_Z=\ell$, and thus a
trivial code with $k=0$.  Thus, to get a useful code with row weight
$W\le 4$, we must have $\wgt(a)=\wgt(b)=2$.

In this case, up to code equivalence, we can choose $a=1+\lambda f$,
$b=1+\mu h$, with non-zero $\lambda,\mu\in F$ and non-identity group
elements $f,h\in G$, so that the support groups $G_a=\langle f\rangle$
and $G_b=\langle h\rangle$ be cyclic.  Thus, according to Statement
\ref{th:triple-product}, all non-trivial 2BGA codes with row weight
$W=4$ are equivalent to direct sums of abelian-group codes.  Similarly
to GB codes with row weight $4$, these codes can be mapped to rotated
surface codes\cite{Wang-Pryadko-2022}.

Indeed, given a group element $g_{0,0}\equiv g\in G$, the double coset
$G_a gG_b$ can be parameterized as $g_{x,y}=f^xg h^y$, with the
positions $(x,y)\in \mathbb{Z}^2$ on the integer plane corresponding
to the same group elements identified.  In particular,
$(x,y)\simeq (x+\ord f,y)\simeq (x,y+\ord h)$, where $\ord f$ is the
order of the group element $f$.  Positive displacements along
horizontal and vertical edges correspond to multiplication by
$\lambda g$ and $\mu h$, respectively.  This way, we obtain a finite
locally planar graph covered by the infinite square lattice.  It is
easy to check that the local structure of the CSS code associated with
the double coset $G_a gG_b$ is exactly that of a square-lattice
surface code, so that the corresponding codewords are homologically
non-trivial chains (or co-chains) connecting pairs of identified
vertices (faces) on the integer plane.

The nature of the resulting double-coset subcodes depends on the
homology group associated with the covering map
$\varphi:\mathbb{Z}^2\to \{g_{x,y}|x,y\in \mathbb{Z}\}$.  For example, with
$g=1$, we get a toric code if
$\langle f\rangle\cap \langle h\rangle=\{1\}$.  More generally, we get
a surface code on a finite transitive graph which locally looks like a
square lattice.  The hand-shaking lemma guarantees that every
connected component with $V$ vertices has $2V$ edges, and its dual
version gives $V$ faces, which gives a $k=2$ surface code for any
connected component.

Further, a counting argument identical to that used in the proof of
Statement 14 from Ref.~\onlinecite{Wang-Pryadko-2022} gives an upper
bound for the $W=4$ double-coset subcode distances $d_X^{(g)}$ and
$d_Z^{(g)}$ in terms of its length $n^{(g)}=2\left|G_a g G_b\right|$,
\begin{equation}
  \bigl(d_\mu^{(g)}\bigr)^2\le n^{(g)},\quad \mu\in\{X,Z\},\label{eq:w4-upper-bound}
\end{equation}
which also gives $d^2\le n^{(g)}-1$ when $d\equiv d^{(g)}$ is odd.
For both inequalities, we found many cases of saturation.

\section{Numerical results}
\label{sec:numerical}

In this section, we present optimal parameters of short qubit-based
connected 2BGA codes found numerically.  Namely, we computed the
parameters of all inequivalent non-trivial connected binary 2BGA codes
$\lp[a,b]$ with $\wgt(a)+\wgt(b)\le 8$, for all non-abelian groups $G$
of orders $\ell\le 100$, and all abelian groups of orders $\ell\le 50$,
for each group keeping only the first found code with a given
dimension $k$ and distance $d$.

We should point out that the double-coset subcodes could, potentially,
have better parameters than connected 2BGA codes of equal size.
Nevertheless, for a given block length $n$, we do not have a way to
limit the group sizes which would result in subcodes of length $n$.
Therefore, to speed up the calculation, we decided to only consider
the connected codes.

We used the Small Groups library distributed with GAP\cite{GAP4} to enumerate
inequivalent groups, and the GAP package {\tt QDistRnd}
\cite{Pryadko-Shabashov-Kozin-QDistRnd-2021} to calculate the code
distances.  The calculations were performed at the UCR High
Performance Computing Center.

Specifically, to eliminate permutation-equivalent codes for a given
group $G$, we used the default order of group elements in GAP to
establish the alphabetical order of subsets of $G$, which also gives
an ordering for elements of $\mathbb{F}_2[G]$.  In the following, we
write $a<b$ if $a$ goes before $b$ in this order.  Given the weights
$W_a$ and $W_b$, for each pair $(a,b)$ generated consecutively with
$\wgt(a)=W_a$, $\wgt(b)=W_b$, and also $a<b$ if $W_a=W_b$, we
discarded all pairs where $\alpha a\beta< a$ or $\alpha b\beta<b$ for
any $\alpha,\beta\in G$.  Indeed, according to Theorem
\ref{th:permutation-equiv}, these inequalities indicate that a
permutation-equivalent code has already been encountered.  Since the
identity group element $1$ is always the first in the list, we only
needed to consider group algebra elements with $a_1=b_1=1$.  In
addition, with $W_a=W_b$, we discarded the pairs with $\widehat b< a$,
see Theorem \ref{th:permutation-equiv} \ref{theorem3:5}.  After
constructing the matrices, we also made sure to drop all disconnected
codes.

We note that for \emph{even-even} codes, with both $W_a$ and $W_b$
even, rows and columns of the matrices (\ref{eq:L-R-action}) have even
numbers of elements.  For the binary field $\mathbb{F}_2$, this
guarantees that the rows of the matrices $A$, $B$, and thus of the
stabilizer generator matrices $H_X$, $H_Z$ add to zero;
Eq.~(\ref{eq:k-CSS}) guarantees that we get non-trivial codes with
$k\ge 2$.  In comparison, with one or both weights odd, there are many
trivial codes with $k=0$.

\begin{figure}[htbp]
    \centering
    \includegraphics[width=0.49\textwidth]{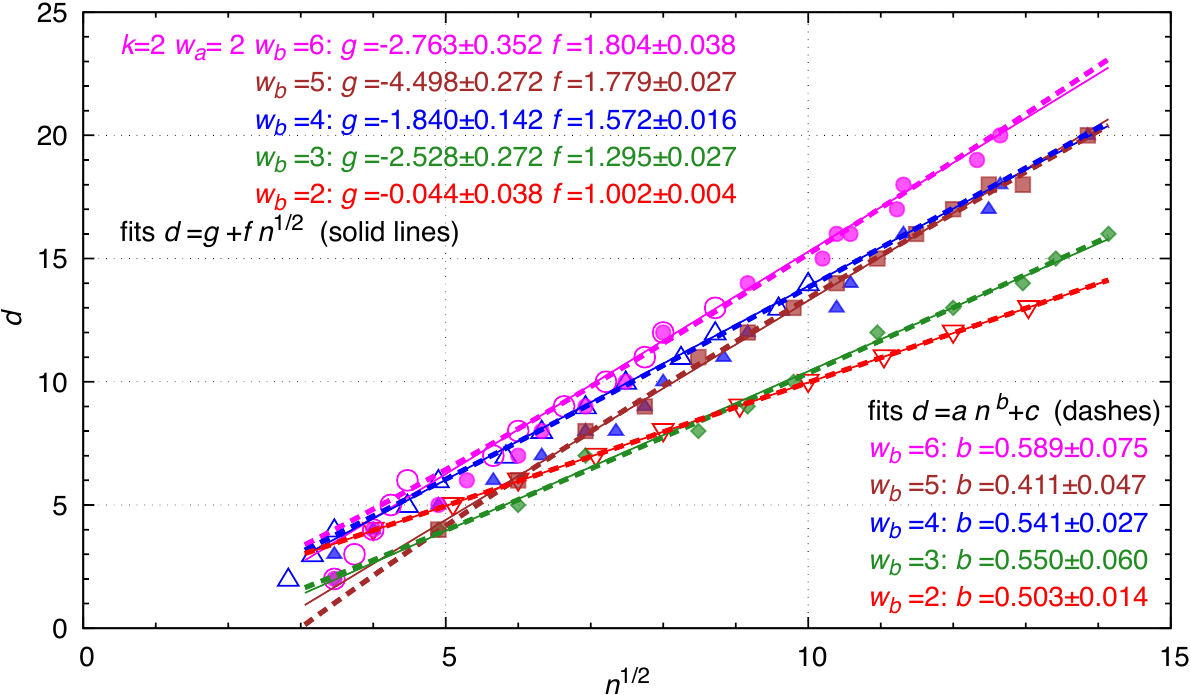}
    \caption{(color online) Distance $d$ of connected 2BGA codes
      encoding $k=2$ qubits, with weights $W_a=2$ and $W_b$ as
      indicated, plotted as a function of the square root of the block
      size $n$.  Red upside-down triangles $\triangledown$, green
      diamonds {\large$\diamond$}, blue triangles $\triangle$, brown
      squares $\square$, and purple circles
      {\scriptsize\raisebox{0.2em}{$\bigcirc$}}, respectively,
      correspond to total weights $W=4$, $5$, $6$, $7$, and $8$.  Open
      symbols correspond to abelian groups, filled symbols to
      non-abelian groups.  Only the shortest codes with each $k$ and
      $d$ found are shown.  The solid and dashed lines, respectively,
      are the fits using $d=g+f n^{1/2}$ and $d=a n^b+c$; the
      coefficients are listed in the captions.}
    \label{fig:disk2}
\end{figure}

\begin{figure}[htbp]
    \centering
    \includegraphics[width=0.49\textwidth]{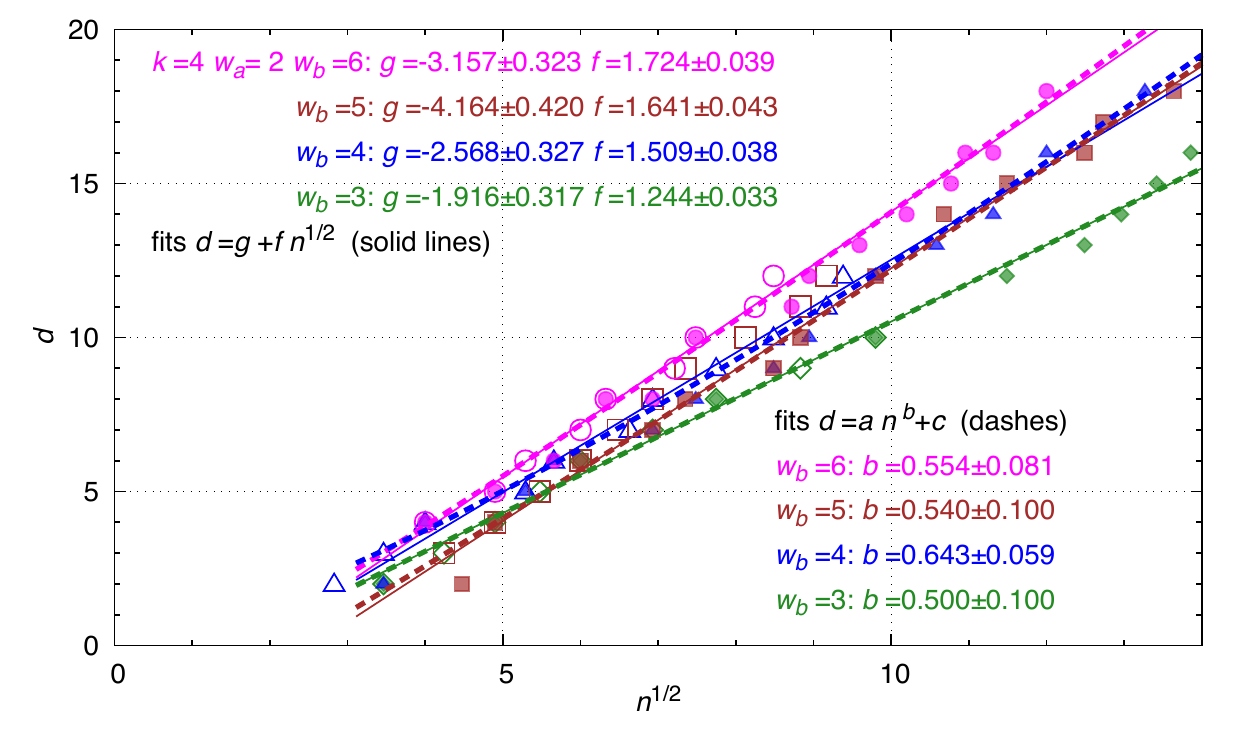}
    \caption{(color online) As in Fig.~\ref{fig:disk2} but for
      codes encoding $k=4$ qubits.  In agreement with the results in
      Sec.~\ref{sec:special-W4}, there are no connected codes with
      $k>2$ and $W_a=W_b=2$.}
    \label{fig:disk4}
  \end{figure}

 \begin{figure}[htbp]
    \centering
    \includegraphics[width=0.49\textwidth]{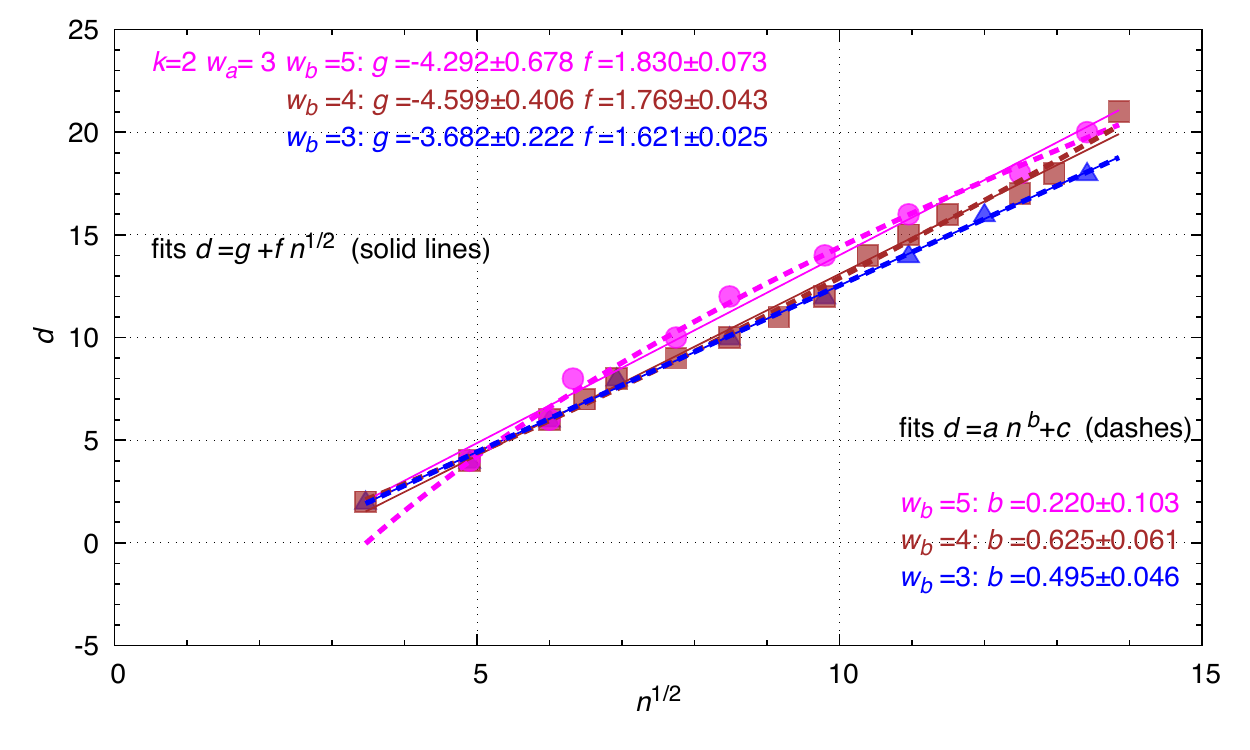}
    \caption{(color online) As in Fig.~\ref{fig:disk2} but for 2BGA
      codes with weights $W_a=3$ and $W_b$ as indicated.  We have not
      found any abelian-group codes with these weights and $k=2$.}
      \label{fig:disk2wa3}
\end{figure}

\begin{figure}[htbp] \centering
  \includegraphics[width=0.49\textwidth]{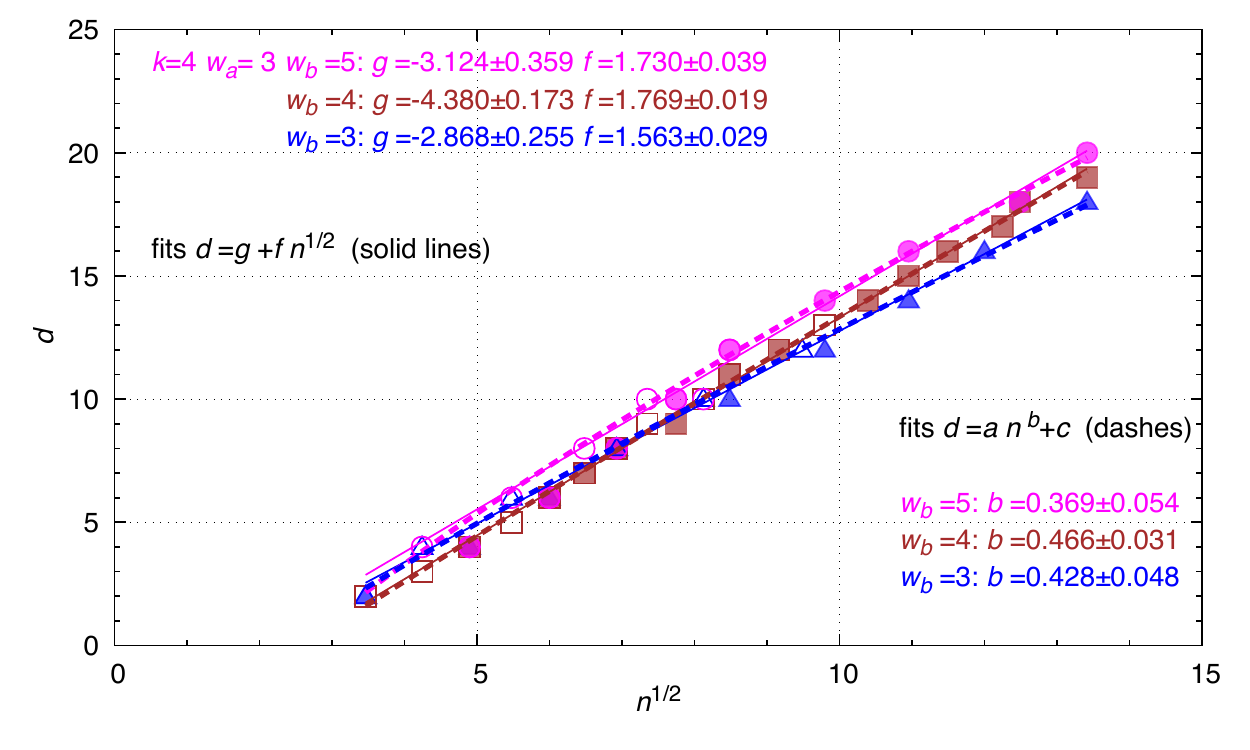}
  \caption{(color online) As in Fig.~\ref{fig:disk2wa3} but for 2BGA
    codes encoding $k=4$ qubits.}
    \label{fig:disk4wa3}
\end{figure}

\begin{figure}[htbp] \centering
  \includegraphics[width=0.49\textwidth]{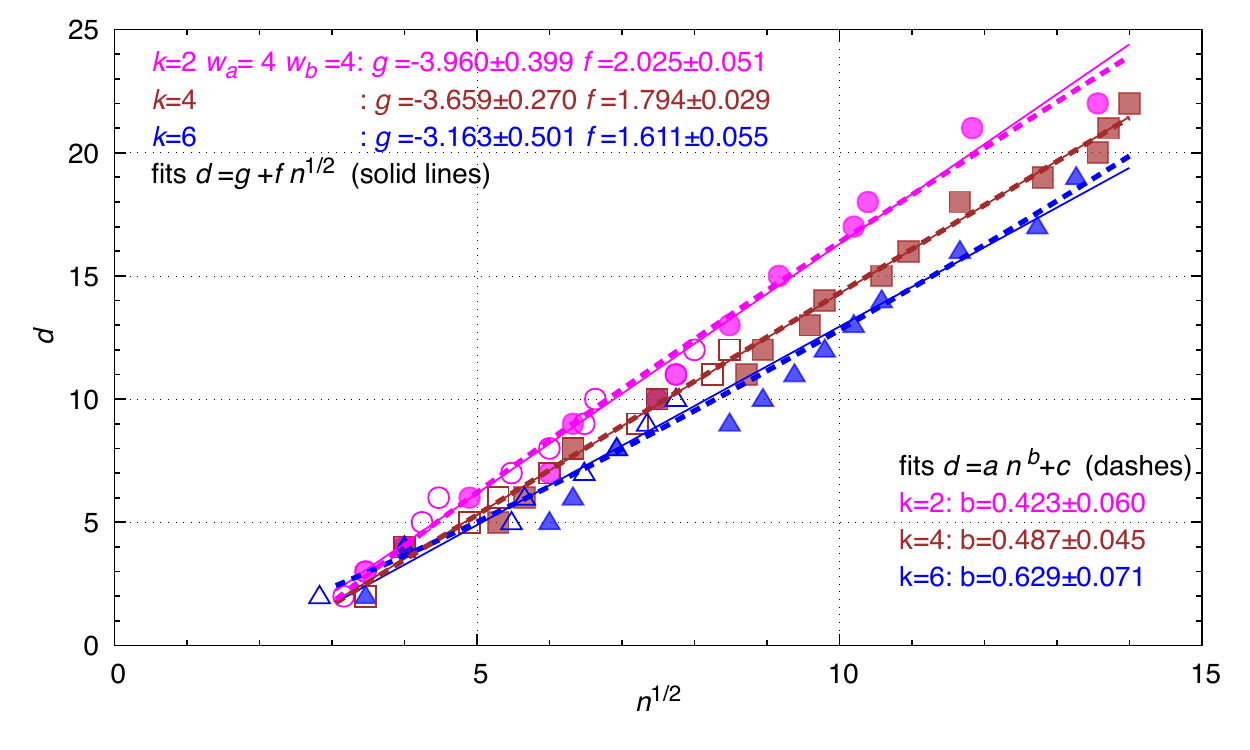}
  \caption{(color online) As in Fig.~\ref{fig:disk2} but for 2BGA
    codes with weights $W_a=W_b=4$ and encoding $k$ qubits as
    indicated.  Circles {\scriptsize\raisebox{0.2em}{$\bigcirc$}},
    squares $\square$, and triangles $\triangle$, respectively,
    correspond to $k=2$, $4$, and $6$.}
  \label{fig:disk2_4_6}
\end{figure}

The computed distances $d$ for codes with $W_a=2$ are plotted as a
function of the square root of the block size $n$ in
Figs.~\ref{fig:disk2} and \ref{fig:disk4} for codes with $k=2$ and
$k=4$, respectively.  Different symbols and colors correspond to row
weights $W\in \{ 4,5,6,7,8\}$ as indicated in the caption of
Fig.~\ref{fig:disk2}, with open and closed symbols corresponding to
codes obtained from abelian and non-abelian groups, respectively.
Figs.~\ref{fig:disk2wa3} and Figs.~\ref{fig:disk4wa3} give similar
data for codes with $W_a=3$ and $W_b$ as indicated, and
Fig.~\ref{fig:disk2_4_6} shows distances for small-$k$ codes with
$W_a=W_b=4$.  These plots all look similar to a family of GB codes
with $k=2$ studied in Ref.~\onlinecite{Wang-Pryadko-2022}.  Namely,
the available largest distances show reasonable agreement with
asymptotic distance scaling $d=g+f n^{1/2}$, with the slope
$f\equiv f(W_a,W_b)$ an increasing function of the total row weight
$W=W_a+W_b$, while different values of $W_a\ge 2 $ and $W_b\ge W_a$
have a relatively minor effect on the coefficients $g$ and $f$.

Square-root scaling of the distance is in agreement with the lower
bound in Statement \ref{th:lower-d-central-intersection} and, in
particular, with the case $G_a\cap G_b=\{1\}$, where 2BGA codes can be
represented as hypergraph-product codes constructed from a pair of
classical group-algebra codes.  To compare, we also tried fitting the
distances with $d=a n^b+c$, where $a$, $b$, and $c$ are parameters.
While an upward or downward curvature corresponding to exponent
$b>1/2$ or $b<1/2$, respectively, can be seen on some of the fits, the
actual deviations from the linear (in $n^{1/2}$) fits are small,
$\Delta d\alt 0.2$ on most plots.  We conclude that 2BGA codes with
dimensions $k\le 4$, row weights $W\le8$, and group sizes studied so far
are not nearly large enough to resolve the question about the scaling
of the code distances of such codes with the block size.

While constructed 2BGA codes with $k>4$ also have maximum distances
$d$ scaling near linearly with $n^{1/2}$, the corresponding
coefficients show relatively little dependence on $k$ (data not
shown).  For this reason, and to reveal the patterns in code
parameters, in Figs.~\ref{fig:k_nwt2+6A} and \ref{fig:k_nwt2+6}, we
plot the distances $d$ of the found 2BGA codes with $k\ge4$ as a
function of $n$, for codes with $W_a=2$ and $W_b=6$ obtained from
abelian and non-abelian groups, respectively.  Most prominent in
Fig.~\ref{fig:k_nwt2+6A} are the sequences of codes with $kd=n$ with
$k\ge 6$ and the distances $d=\{2,3,\ldots ,d_{\rm max}(k,n)\}$, where
the sequence cut-off $d_{\rm max}(k,n)=\mathcal{O}(n^{1/2})$ shows
relatively little dependence on $k$.  Codes with $W_a=2$, $W_b=6$
obtained from non-abelian groups (Fig.~\ref{fig:k_nwt2+6}) also have
sequences with $kd=n$, but only for $k=6$ or doubly-even
$k\in \{4,8,12,16,\ldots\}$; there are also sequences of codes with
$k=4s+2$, $s\ge2$, whose parameters satisfy the relation $(k+2)d=n$.
As we demonstrate in Appendix \ref{sec:examples}, all codes in the
sequences with $kd=n$ can be obtained starting with groups $C_{mh}$
and $D_{m}$, both of which give index-$4$ qQC codes of length
$n=4m$.

\begin{figure}[htbp]
  \centering \includegraphics[width=0.49\textwidth]{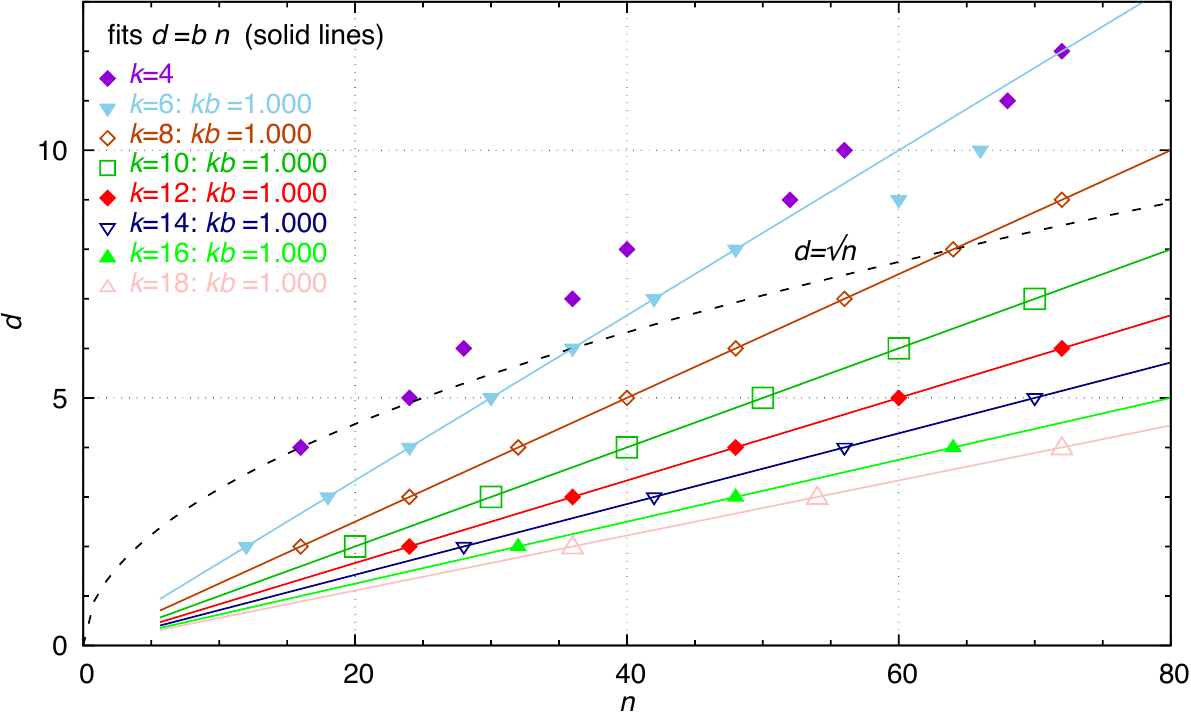}
  \caption{(color online) Distances $d$ of abelian connected 2BGA
    codes with $W_a=2$, $W_b=6$, and $k\ge 4$, plotted as a function
    of the block length $n$.  Different symbols correspond to actual
    codes found, with $k$ values as indicated in the caption.  Solid
    lines are fits to $d=bn$ using only the data on or below the black
    dashed line, $d= n^{1/2}$.  Parameters of most of the codes with
    $k\ge6$ satisfy the relation $kd=n$ exactly.  This can be seen
    from the values of the product $kb$ shown in the caption; the
    sequences terminate at $d=\mathcal{O}(n^{1/2})$ more or less
    independent of the value of $k$.  All the codes fitting this
    pattern can be obtained from the groups $C_{mh}=C_m\times C_2$,
    $m\ge3$, of order $\ell=2m$, producing index-4 qQC CSS codes of
    length $n=4m$.}
  \label{fig:k_nwt2+6A}
\end{figure}

\begin{figure}[htbp]
  \centering
  \includegraphics[width=0.49\textwidth]{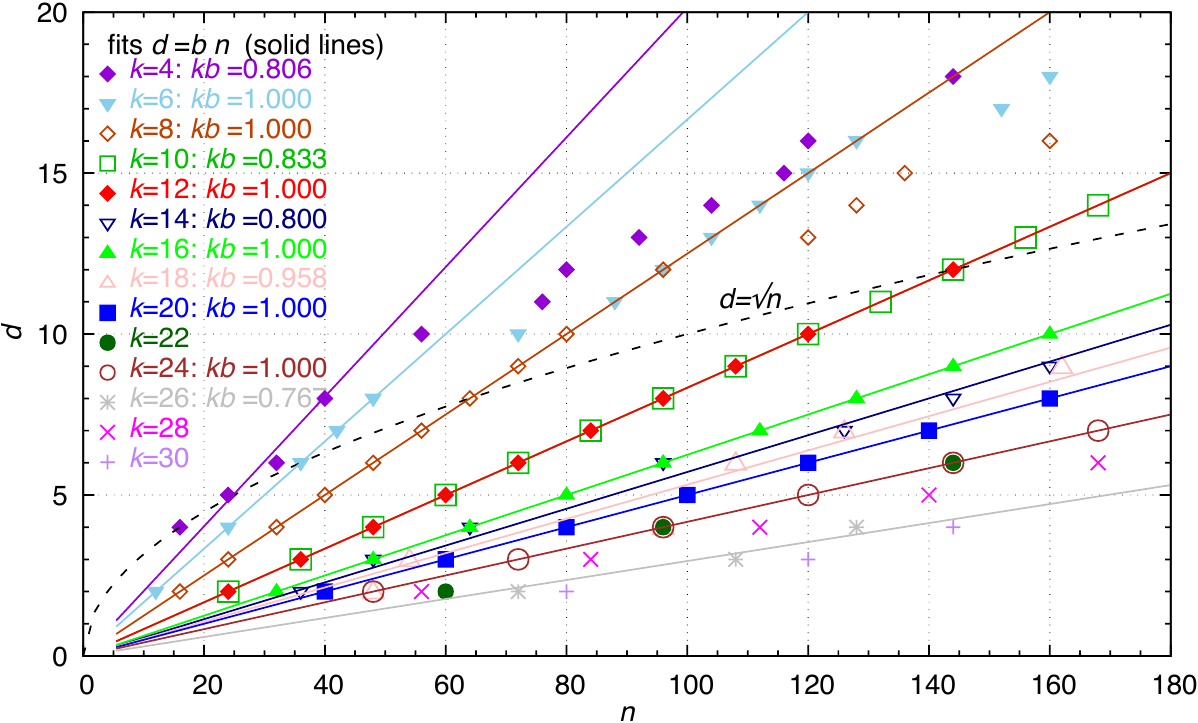}
  \caption{(color online) As in Fig.~\ref{fig:k_nwt2+6A} but for
    connected 2BGA codes with $W_a=2$, $W_b=6$, and $k\ge 4$ even,
    obtained from non-abelian groups.  As can be seen from
    Tab.~\ref{tab:D2m-2+6} in Appendix~\ref{sec:examples}, sequences
    of codes with $kd=n$ and doubly-even $k=4s$, $s\ge1$, can be
    obtained from the groups $D_{m}$ which give index-4 qQC codes of
    length $n=4m$; the sequences terminate at
    $d=\mathcal{O}(n^{1/2})$.}
  \label{fig:k_nwt2+6}
\end{figure}

Fig.~\ref{fig:kd_nwt2+6} shows the same data as
Fig.~\ref{fig:k_nwt2+6} but with $kd$ plotted as a function of $n$.
It demonstrates that for all constructed codes with $W_a=2$ we have
$kd\le n$.  By this measure, the best among codes with $W_a=2$ are
index-4 qQC codes from groups $C_{mh}$ and $D_{m}$ with $kd=n$.  We
should note that, according to Example 12 in
Ref.~\onlinecite{Kovalev-Dumer-Pryadko-2011}, sufficiently long codes
with $kd=n$ can be obtained as additive cyclic codeword-stabilized
(CWS) codes\cite{Chuang-CWS-2009,Cross-CWS-2009,Chen-Zeng-Chuang-2008}
from a set of $k$ classical repetition codes, although such codes do
not necessarily have bounded stabilizer weights.  Of course,
asymptotically the ratio $kd/n$ increases without a bound for many
families of quantum LDPC codes, e.g., as $\mathcal{O}(n^{1/2})$ for
hypergraph-product codes\cite{Tillich-Zemor-2009}. However, it is not
trivial to get $kd> n$ in a degenerate quantum code of length
$n\lesssim 10^2$.  Some of such 2BGA codes constructed here are listed
in Table~\ref{tab:large-k}.

\begin{figure}[htbp]
  \centering \includegraphics[width=0.49\textwidth]{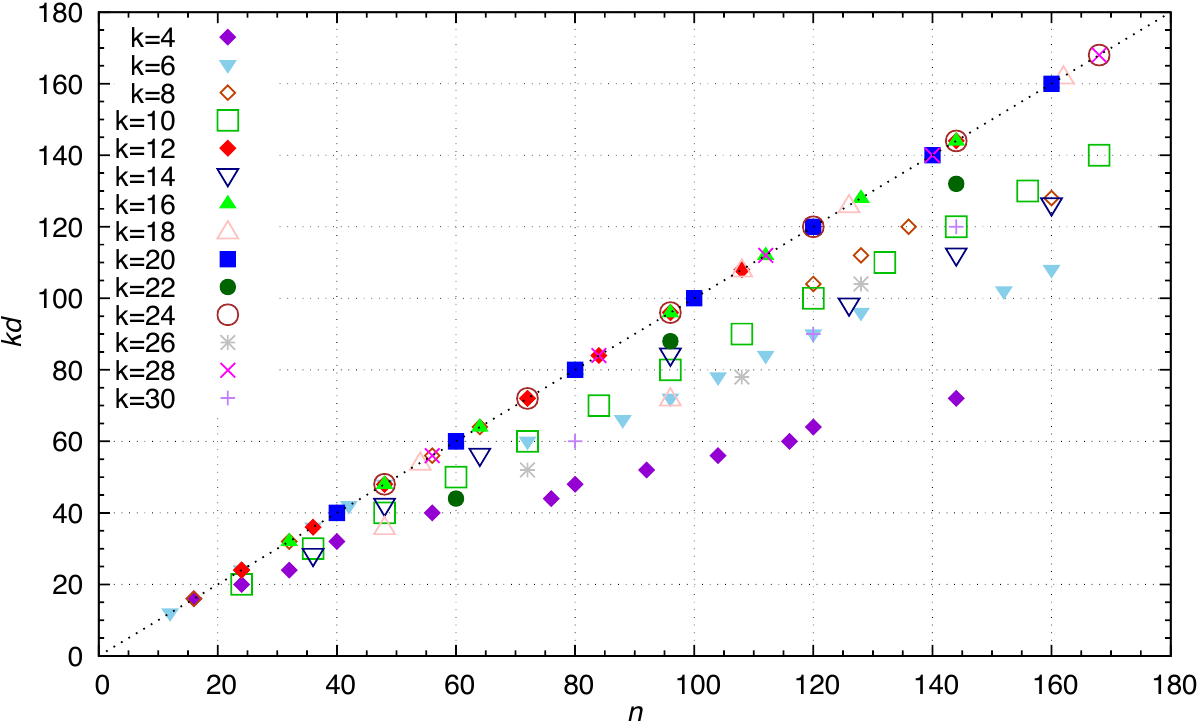}
  \caption{(color online) Same data as in Fig.~\ref{fig:k_nwt2+6}, but
    with the products $kd$ plotted as a function of $n$.  The dotted
    line is the diagonal, $kd=n$; all of the found codes with $W_a=2$
    and $W_b=6$ have $kd\le n$.}
    \label{fig:kd_nwt2+6}
\end{figure}

\begin{table*}[htbp]
  \centering
  \begin{tabular}[c]{c|c||c|c|c||c|c|c|c}
    $\ell$& $\#$&$n$&$k$&$d$&$a$&$b$&presentation&structure\\ \hline\hline 
    36 & 2 & 72 & 8 &  9   &$1+ r^{28} $ &$1+ r^9+ r^{18}+ r^{12}+ r^{29}+ r^{14} $  &$\langle r|r^{36}\rangle $ & $C_{36}$\\ \hline
    %% [ 3 ]   [ 2, 4, 5, 14, 29 ]  
                                                                                                                                                     
    36 & 1 & 72 &  8 &  9    &$1+ r $ &$1+ s+ r^6+ s^3r+ sr^7+ s^3r^5 $  &$\langle r,s|s^4,r^6,s^{-1}rsr\rangle $& $C_9\ltimes C_4$\\                
    %% [ 4 ]   [ 2, 5, 14, 17, 36 ]                                                                                       
    40 & 1 & 80 & 8 &  10   &$1+ sr^4 $ &$1+ r+ r^2+ s+ s^3r+ s^2r^6 $  &$\langle r,s|s^5,r^8,r^{-1}srs\rangle $ & $C_5\ltimes C_8$\\
    %% [ 11 ]   [ 2, 3, 5, 16, 25 ] 
    48 & 10& 96 & 8 &  12  &$1+ sr^2 $ &$1+ r+ s^3+ s^4+ s^2r^5+ s^4r^6 $  &$\langle r,s|s^6,r^8,(rs)^8\rangle $ & $(C_3\ltimes C_8)\ltimes C_2$\\ \hline 
    %% [ 26 ]   [ 2, 3, 6, 23, 29 ] 
                                                                                                                                                     
    27 & 1 & 54 & 6 &  9   &$1+ r+ r^3+ r^7 $ &$1+ r+ r^{12}+ r^{19} $  &$\langle r|r^{27}\rangle $ & $C_{27}$\\
    %% [ 2, 3, 13 ]   [ 2, 9, 15 ]  
    30 & 4 & 60 & 6 &  10 &$1+ r^{10}+ r^6+ r^{13} $ &$1+ r^{25}+ r^{16}+ r^{12} $   &$\langle r|r^{30}\rangle $ & $C_{30}$\\
    %% [ 3, 4, 23 ]   [ 5, 8, 9 ]   
    35 & 1 & 70 &  8 &  10   &$1+ r^{15}+ r^{16}+ r^{18} $ &$1+ r+ r^{24}+ r^{27} $  &$\langle r|r^{35}\rangle $ & $C_{35}$\\
    %% [ 3, 9, 27 ]   [ 5, 26, 32 ] 
    36 & 2 & 72 & 8 &  10  &$1+ r^9+ r^{28}+ r^{31} $ &$1+ r+ r^{21}+ r^{34} $  &$\langle r|r^{36}\rangle $ & $C_{36}$\\
    %% [ 2, 3, 26 ]   [ 6, 8, 31 ]  
    36 & 2 & 72 & 10&  9    &$1+ r^9+ r^{28}+ r^{13} $ &$1+ r+ r^3+ r^{22} $  &$\langle r|r^{36}\rangle $ & $C_{36}$\\ \hline
    %% [ 2, 3, 16 ]   [ 6, 17, 21 ] 
                                                                                                                                                     
    36 & 1 & 72 & 8 &  9  &$1+ s+ r+ sr^6 $ &$1+ s^2r+ s^2r^6+ r^2 $  &$\langle r,s|s^4,r^9,s^{-1}rsr\rangle $    & $C_9\ltimes C_4$\\
    %% [ 2, 4, 8 ]   [ 9, 10, 11 ]  
    40 & 1 & 80 & 8 &  10 &$1+ r+ s+ s^3r^5 $ &$1+ r^2+ sr^4+ s^3r^2 $  &$\langle r,s|s^5,r^8,s^{-1}rsr\rangle $  & $C_5\ltimes C_8$\\
    %% [ 2, 5, 23 ]   [ 3, 11, 26 ] 
    48 & 1 & 96 & 8 &  12 &$1+ r+ s+ r^{14} $ &$1+ r^2+ sr^4+ r^{11} $  &$\langle r,s|s^3,r^{16},r^{-1}srs\rangle$ & $C_3\ltimes C_{16}$\\
    %% [ 2, 6, 25 ]   [ 3, 15, 19 ] 
    40 & 8 & 80 & 9 &  9  &$1+ sr^5+ r^5+ sr^6 $ &$1+ s^2+ r+ s^2r^3 $  &$\langle r,s|s^4,r^{10},(rs)^2\rangle $  & $(C_{10}\times C_2)\ltimes C_2$\\
    %% [ 2, 3, 14 ]   [ 4, 10, 33 ] 
    42 & 3 & 82 & 10&  9  &$1+ r^7+ r^8+ sr^{10} $ &$1+ s+ r^5+ s^2r^{13} $  &$\langle r,s|s^3,r^{14},r^{-1}srs\rangle $ & $C_7\times S_3$\\
    %% [ 2, 3, 20 ]   [ 4, 28, 39 ] 
    48 & 13& 96 & 10&  12 &$1+ s+ r^9+ sr $ &$1+ s^2r^9+ r^7+ r^2 $  &$\langle r,s|s^4,r^{12},s^{-1}rsr\rangle$    & $C_{12}\ltimes C_4$\\
    %% [ 2, 3, 20 ]   [ 11, 27, 31 ]
    48 & 5 & 96 & 11&  9  &$1+ s+ r^9+ sr^{13} $ &$1+ r^9+ sr^{18}+ r^7 $  &$\langle r,s|s^2,r^{24},(rs)^8\rangle$ & $C_{24}\ltimes C_2$ \\
    %% [ 2, 3, 34 ]   [ 3, 8, 39 ]  
    48 & 9 & 96 & 12&  10 &$1+ r+ s^3r^2+ s^2r^3 $ &$1+ r+ s^4r^6+ s^5r^3 $ &$\langle r,s|s^6,r^8,r^{-1}srs\rangle$ & $C_2 \times (C_3\ltimes C_8)$
    %% [ 2, 11, 22 ]   [ 2, 29, 33 ]
  \end{tabular}
  \caption{Parameters of degenerate connected 2BGA codes with
    $n< 10^2$, $k d\ge n$, and $d>W$.  Only codes with the largest
    examined row weight $W=8$ have been found with such
    parameters. Here $\ell$ is the group order, ``$\#$'' is the group
    number specific to GAP, $n$, $k$, $d$ are parameters of the code
    $\lp[a,b]$ with $a$ and $b$ as indicated, ``presentation'' is the
    shortest group presentation, and ``structure'' is the output of a
    function call ``{\tt
      StructureDescription(SmallGroup(}$\ell$,$\#${\tt ));}'' in GAP,
    with $C_m$ an order-$m$ cyclic group, $S_3$ the order-$6$
    symmetric group, ``$\times$'' the direct product of groups, and
    ``$\ltimes$'' the semidirect product, with the normal subgroup on
    the left.  The four row blocks, respectively, list codes with
    $W_a=2$, $W_b=6$, first abelian then non-abelian, followed by
    codes with $W_a=W_b=4$, first abelian then non-abelian.  The terms
    in group algebra elements $a$ and $b$ are sorted according to the
    internal presentation in GAP.}
\label{tab:large-k}
\end{table*}

\begin{figure}[htbp] \centering
  \includegraphics[width=0.49\textwidth]{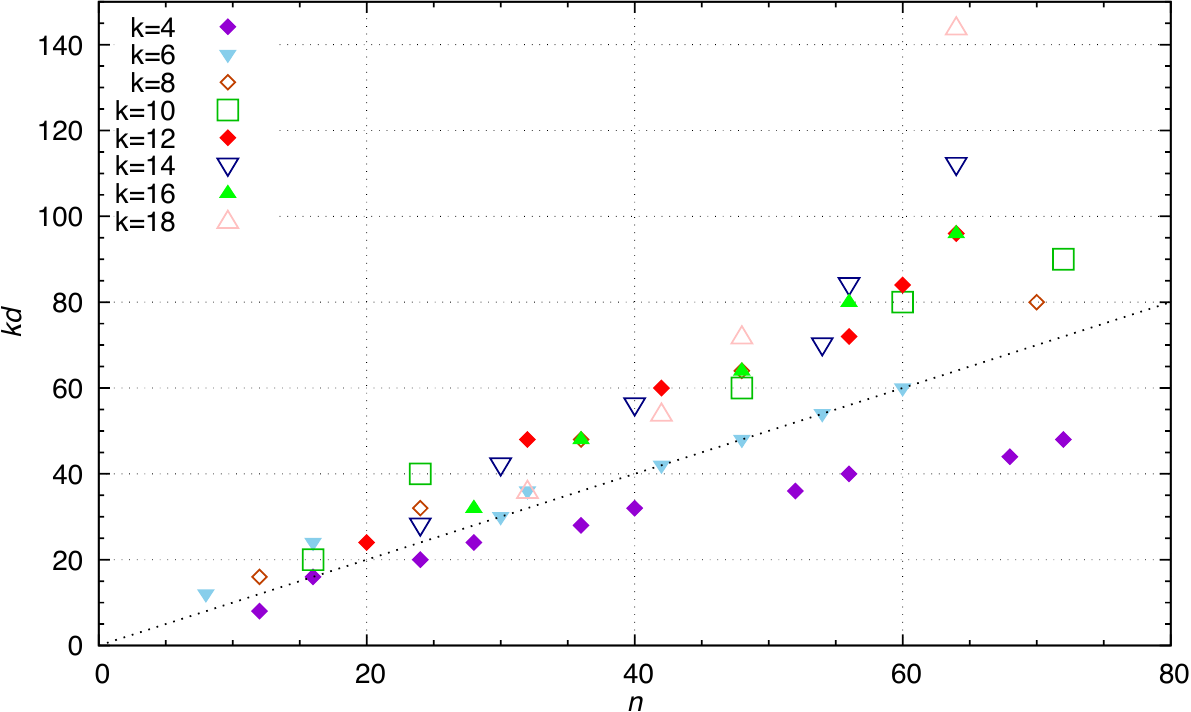}
  \caption{(color online) As on Fig.~\ref{fig:kd_nwt2+6} but for
    abelian connected 2BGA codes with $W_a=W_b=4$ and $k\ge 4$.}
  \label{fig:kd_nwt4+4A}
\end{figure}

\begin{figure}[htbp] \centering
  \includegraphics[width=0.49\textwidth]{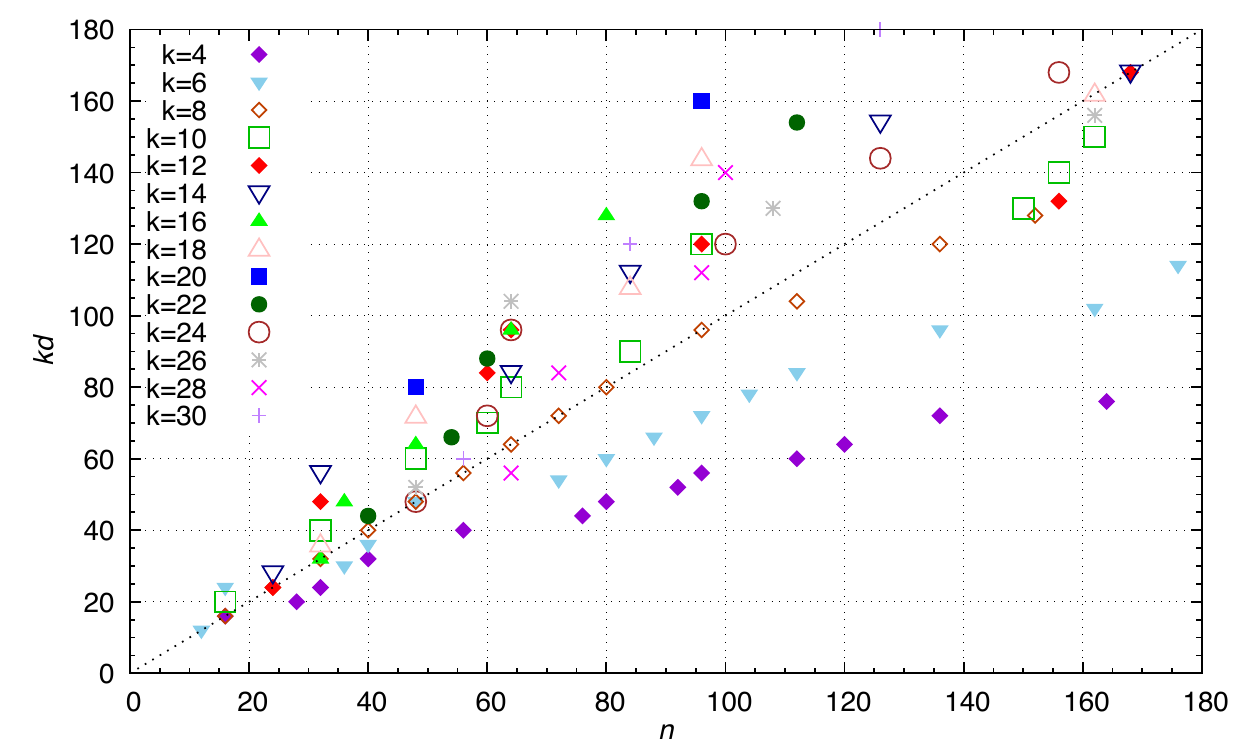}
  \caption{(color online) As on Fig.~\ref{fig:kd_nwt4+4A} but for
    codes with $W_a=W_b=4$ and $k\ge4$ even obtained from non-abelian
    groups.}
  \label{fig:kd_nwt4+4}
\end{figure}

Most of the constructed 2BGA codes with $W_a=W_b=4$ do not have simple
relations between their parameters, and the plots of $d$ vs.\ $n$ are
not illuminating.  For this reason, we decided to illustrate the
parameters of such codes with $k\ge 4$ by plotting $kd$ vs.\ $n$, for
abelian codes in Fig.~\ref{fig:kd_nwt4+4A} and for non-abelian codes
in Fig.~\ref{fig:kd_nwt4+4}.  Only codes with even values of $k$ are
shown.  As evident from the plots, many of the constructed codes have
$kd>n$, including the abelian code $[[64,18,8]]$ with $kd/n>2$
obtained from the group $C_4\times C_4\times C_2$.

\section{Conclusions}
\label{sec:conclusions}

In conclusion, we introduced and studied analytically and numerically
a family of quantum 2BGA codes, an ansatz particularly suitable for
constructing short and intermediate-length quantum LDPC codes.
Indeed, unlike for many of the ``product''
constructions\cite{Tillich-Zemor-2009,%
Hastings-Haah-ODonnell-2020,Panteleev-Kalachev-2020,%
Breuckmann-Eberhardt-2020,Panteleev-Kalachev-2021} which tend to give
very long quantum codes, the block length of a 2BGA code is twice the
size of the group used in the construction.  Further, unlike for
quantum group-algebra codes in
Ref.~\onlinecite{Naghipour-Jafarizadeh-Shahmorad-2015} which are
analogs of quantum cyclic codes, here the CSS orthogonality constraint
is naturally satisfied for any pair of group algebra elements.

Moreover, the 2BGA codes are a generalization of GB codes from cyclic
to more general groups, and share many of the nice properties of the
GB codes.  In particular, we show that 2BGA codes include as a special
case quantum hypergraph-product codes constructed from classical left
(right) group-algebra codes, which guarantees the existence of
finite-rate 2BGA codes with $d=\mathcal{O}(n^{1/2})$.

Although we have not been able to give explicit expressions for the
parameters of 2BGA codes, we constructed a number of equalities and
inequalities relating the parameters to those of other classical and
quantum codes.  From the practical point of view, most important
results are the code equivalence relations in Theorem
\ref{th:permutation-equiv}, and the analysis of block structure of
2BGA codes in Sec.~\ref{sec:block-structure}.

We used these symmetries to enumerate the inequivalent parameters of
binary (designed for qubits) 2BGA codes with stabilizer generator
weights $W\le 8$ for all abelian groups of ranks $\ell\le 50$ and
non-abelian groups of ranks $\ell\le 100$.  Although the sample is too
small to even try to identify the asymptotic form of the distance
scaling, some of the constructed codes have parameters substantially
better than those of GB codes of similar size.  Some of the
constructed codes with row weights $W=8$, distances $d\ge8$ and
dimensions $k\ge 8$ have $kd>n$, a condition difficult to reach for a
short quantum LDPC code.  These codes with larger $k$ have many
redundant minimum-weight stabilizer generators and are expected to
perform well in a fault-tolerant setting as data-syndrome
codes\cite{Fujiwara-2014,Ashikhmin-Lai-Brun-2014,Ashikhmin-Lai-Brun-2016,%
  Zeng-Ashikhmin-Woolls-Pryadko-2019}.

The 2BGA codes based on non-abelian groups have bigger set of possible
parameters than the abelian 2BGA or GB codes; in particular, only the
former codes may have a dimension $k$ given by an odd number, and
there are more short degenerate non-abelian codes with large $k$, see
Table \ref{tab:large-k}, especially for larger $n$.  On the other
hand, some of abelian-group codes found have better parameters than
any of the non-abelian 2BGA codes with the same sizes, and the abelian
codes in Table \ref{tab:large-k} are all obtained from cyclic groups
(and thus are GB codes).  Thus, at least for group sizes studied here,
there is no clear advantage of abelian vs.\ non-abelian 2BGA codes.
Definitely, GB codes and abelian 2BGA codes, because of the simpler
structure, are much more convenient to use.

\begin{acknowledgments} We are grateful to Pavel Panteleev for
  enlightening comments on an early version of this work.  This work
  was supported in part by the APS M. Hildred Blewett Fellowship (HKL)
  and the NSF Division of Physics via the grant 2112848 (LPP).
\end{acknowledgments}

\appendix

\section{Detailed proofs for Sec.~\ref{sec:two-block}}
\label{sec:proofs-III}

\subsection{Proof of Statement \ref{th:rank-defect-zero}}
\begin{proof} It is enough to prove the statement for $E\equiv E_A$;
the statement for $F_A$ is obtained by a similar argument (or by a
transposition).

  First, any idempotent matrix $E=E^2\in M_\ell(F)$ is diagonalizable
by a change of basis, since its minimal polynomial $\mu_E(x)=x(1-x)$
factors into distinct linear terms.  Let $E_A=U^{-1}DU$, $U\in
M_\ell(F)$ an invertible matrix and $D$ a diagonal $(0,1)$ matrix with
a block of $r\equiv \rank E_A=\rank A$ ones in the top left corner.
Use $U$ to transform each square matrix, $A'=UAU^{-1}$, $B'=UBU^{-1}$,
and $E_A'=D$.  Such an invertible transformation preserves both ranks
and commutativity, thus $B'$ must be block diagonal, with two square
blocks of size $r$ and $\ell-r$, and ranks $\rank E_AB$ and
$\rank(I-E_A)B$, respectively.  Similarly, since $DA'=A'$ and their
ranks coincide, the matrix $A'$ has only the first $\rank A$ rows
non-zero and linearly independent.  This gives that the rank of the
product $B'A'$ coincides with that of the first block of $B'$, i.e.,
$\rank BA=\rank E_AB$, giving $\delta_X=0$.
\end{proof}

\subsection{Proof of statement \ref{th:rank-defect-equal}}

\begin{proof} Using the definitions in Eq.~(\ref{eq:sim-transpose}),
write
  $$
  SE_ABS^{-1}= F_A^T B^T= (BF_A)^T.
  $$
  The ranks on the l.h.s.\ and on the r.h.s.\ are $p_\star+\delta_X$ and
$p_\star+\delta_Z$, respectively, which gives $\delta_X=\delta_Z$.
\end{proof}

\subsection{Proof of Statement \ref{th:d-upper-chain}}

\begin{proof} To be specific, we only consider $\mu=L$; the case
$\mu=R$ is similar.  The proof amounts to a demonstration that the set
contributing to the distance (\ref{eq:subsystem-d}) for each
subsequent code is a subset of the previous one.  (a) The additional
row block in matrix $H_X^{(L)}$ (compared to $H_X$) guarantees that
any $Z$-like codeword in $Q_L'$ is also a $Z$-like codeword in
$\lp[a,b]$, but not necessarily the other way.  (b) Any non-trivial
$Z$-codeword $\bs u$ in $Q_L''$ is also a non-trivial codeword ${{\bs
u}\choose \bs 0}$ in $Q_L'$.  (c) Any non-zero codeword $\bs u \in
C_L$ is a $Z$-codeword in $Q_L''$, with an extra row block in $H_L$
fully suppressing the degeneracy.  The condition $E_B\bs u=0$
guarantees that $\bs u\neq0$ cannot be set to zero by adding linear
combinations of the rows of $(H_Z)_L$.
\end{proof}

\subsection{Proof of statement \ref{th:upper-d-block-diagonal}}

\begin{proof} Indeed, since $A$ is block-diagonal with the maximum
block size $m$, we can choose a set of basis vectors
$\mathcal{U}\equiv \{\bs u_1,\bs u_2,\ldots\}$ of the code $C_A^\perp$
so that the support of each vector fits entirely in a single block,
which implies $\wgt(\bs u_j)\le m$.  By the condition, this code
contains a non-zero vector $\bs u$ linearly-independent from the
columns of $B$.  Linear independence can be also written as
$(I-E_B)\bs u\neq0$, see Eq.~(\ref{eq:EA-FA-matrices}).  Since $\bs u$
is a linear combination of the basis vectors in $\mathcal{U}$, at
least one of these satisfies the equation $(I-E_B)\bs u_j\neq0$, which
gives the upper bound in question, $d_Z(A,B^T)\le \wgt \bs u_j\le m$.
\end{proof}

\subsection{Proof of statement \ref{th:d-lower-puncturing}}

The proof is based on the following Lemma (note that the formulation
in the original paper \cite{Zeng-Pryadko-hprod-2020} is missing a
condition; this was corrected in the Erratum).
\begin{lemma}[$Z$-puncturing bound\cite{Zeng-Pryadko-hprod-2020}]
  Consider a stabilizer code $Q = \css(H_X, H_Z)$ with the parameters
  $[[n, k, (d_X, d_Z)]]_q$ and a qudit index set ${\cal V} =
  [n]$. Given a partition into complementary sets
  ${\cal I} \subset {\cal V}$ and
  ${\cal J} ={\cal V} \setminus {\cal I}$, suppose a logical generator
  matrix $L_X$ can be chosen so that none of its $k$ rows is supported
  both in ${\cal I}$ and in ${\cal J}$. Let
  $Q' = \css \biglb((H_X )_{\cal I}, H_Z[{\cal I}]\bigrb)$ and
  $Q'' = \css \biglb((H_X )_{\cal J}, H_Z[{\cal J}]\bigrb)$ be the codes whose $X$
  generator matrices are shortened and $Z$ generator matrices
  punctured to ${\cal I}$ and ${\cal J}$, respectively. Then the $Z$ distances of
  the three codes satisfy the inequality $d_Z\ge \min(d'_Z, d''_Z )$.
\end{lemma}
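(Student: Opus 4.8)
The plan is to take a minimum-weight nontrivial $Z$-logical operator $\bs c$ of $Q$, so that $\bs c\in C_{H_X}^\perp\setminus C_{H_Z}$ with $\wgt\bs c=d_Z$, and to show that \emph{at least one} of its restrictions $\bs c[{\cal I}]$ or $\bs c[{\cal J}]$ is a nontrivial $Z$-logical operator of $Q'$ or $Q''$, respectively. Since ${\cal I}$ and ${\cal J}$ partition ${\cal V}$, we have $\wgt\bs c=\wgt\bs c[{\cal I}]+\wgt\bs c[{\cal J}]$, so establishing that $\bs c[{\cal I}]$ is nontrivial in $Q'$ gives $d_Z=\wgt\bs c\ge\wgt\bs c[{\cal I}]\ge d_Z'$, and symmetrically for ${\cal J}$; in either case $d_Z\ge\min(d_Z',d_Z'')$.

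First I would check that $Q'$ and $Q''$ are genuine CSS codes and that $\bs c[{\cal I}]$ is an admissible $Z$-codeword of $Q'$. Every row of the shortened matrix $(H_X)_{\cal I}$ has the form $\bs x[{\cal I}]$ with $\bs x\in C_{H_X}$ supported inside ${\cal I}$; since $H_XH_Z^T=0$ forces $C_{H_X}\perp C_{H_Z}$, see~(\ref{eq:CSS-orthogonality}), such $\bs x$ is orthogonal to every row of $H_Z$, and restricting to ${\cal I}$ leaves the inner product unchanged, giving $(H_X)_{\cal I}\,(H_Z[{\cal I}])^T=0$; the same holds on ${\cal J}$. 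The identical localization argument, applied now to $\bs c\in C_{H_X}^\perp$ against the rows $\bs x[{\cal I}]$ of $(H_X)_{\cal I}$, yields $\bs c[{\cal I}]\cdot\bs x[{\cal I}]=\bs c\cdot\bs x=0$, so $\bs c[{\cal I}]\in C_{(H_X)_{\cal I}}^\perp$ is a valid $Z$-codeword of $Q'$.

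The crux is nontriviality, and this is exactly where the hypothesis on $L_X$ enters. Using~(\ref{eq:logical}), a standard pairing argument shows that a codeword $\bs c\in C_{H_X}^\perp$ is trivial, i.e.\ lies in $C_{H_Z}$, if and only if $\bs c\,L_X^T=0$: writing $\bs c$ in terms of the spanning set of $C_{H_X}^\perp$ given by $C_{H_Z}$ together with the rows of $L_Z$, and using $L_XH_Z^T=0$ together with $L_XL_Z^T=I_k$, the components of $\bs c\,L_X^T$ read off precisely the logical content of $\bs c$. Hence for our nontrivial $\bs c$ there is a row $\bs l$ of $L_X$ with $\bs c\cdot\bs l\neq0$. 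By hypothesis $\bs l$ is supported entirely in ${\cal I}$ or entirely in ${\cal J}$; assume the former. Then $\bs l[{\cal I}]$ is orthogonal to every row of $H_Z[{\cal I}]$, again because $L_XH_Z^T=0$ and $\bs l$ is localized, while $\bs c[{\cal I}]\cdot\bs l[{\cal I}]=\bs c\cdot\bs l\neq0$. Since any vector of $C_{H_Z[{\cal I}]}$ must annihilate everything in $(C_{H_Z[{\cal I}]})^\perp$, this forces $\bs c[{\cal I}]\notin C_{H_Z[{\cal I}]}$, so $\bs c[{\cal I}]$ is a nontrivial $Z$-logical of $Q'$, completing the chain above; the case $\bs l$ supported in ${\cal J}$ gives the bound through $Q''$ instead.

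I expect the main obstacle to be conceptual rather than computational: arranging that the single detecting logical $\bs l$ be confined to one side of the cut, which is impossible in general but guaranteed by the stated locality of $L_X$. This is precisely the condition whose omission is corrected in the Erratum, and it is the only place the hypothesis is used — without it, both $\bs l[{\cal I}]$ and $\bs l[{\cal J}]$ could fail to detect the respective restrictions. I would also be careful about the structural asymmetry, namely that the $X$ generators are \emph{shortened} to ${\cal I}$ while the $Z$ generators are \emph{punctured}, which mirrors the punctured/shortened duality of~(\ref{eq:dual-shortening}) and is exactly what keeps the two orthogonality checks and the pairing argument mutually consistent.
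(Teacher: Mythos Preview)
The paper does not actually prove this lemma: it is quoted verbatim from Ref.~\onlinecite{Zeng-Pryadko-hprod-2020} (with the corrected hypothesis from the Erratum) and used as a black box in the proof of Statement~\ref{th:d-lower-puncturing}. So there is no in-paper proof to compare against.

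That said, your argument is correct and is the standard one. The key steps---that shortening $H_X$ and puncturing $H_Z$ to the same set preserves CSS orthogonality via the duality~(\ref{eq:dual-shortening}), that the restriction $\bs c[{\cal I}]$ lands in $C_{(H_X)_{\cal I}}^\perp$, and that a localized logical $\bs l$ witnesses nontriviality of $\bs c[{\cal I}]$ through the pairing $\bs c[{\cal I}]\cdot\bs l[{\cal I}]=\bs c\cdot\bs l\neq 0$---are all sound, and you correctly identify the single place where the hypothesis on $L_X$ is used. Your remark that without it both restrictions could be trivial is exactly the failure mode the Erratum addresses.
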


\begin{proof}[Proof of Statement \ref{th:d-lower-puncturing}] We
  construct the lower bound for $Z$-codewords; CSS symmetry combined
  with the block permutation symmetry gives the other bound.  The
  proof amounts to a demonstration that the condition of the
  $Z$-puncturing bound lemma applies, which relates $d_Z$ to the
  $Z$-distances of the gauge-fixed codes (\ref{eq:Z-punctured}), which
  are known to have the same $Z$-distances as the corresponding
  single-block erasure codes, see Eq.~(\ref{eq:subsystem-d}).  Indeed,
  with $\delta_X=\delta_Z=0$, the total number of independent
  codewords in the original code matches the sum of those for the two
  $Z$-punctured codes.  Thus, we just need to show that any
  non-trivial $Z$-codeword in one of the $Z$-punctured codes can be
  padded with zeros to become a non-trivial codeword of the original
  code, and independent from the codewords coming from the other
  punctured code.  To this end, take $\bs u\in F^\ell$ a non-trivial
  $Z$-codeword from the left $Z$-punctured code,
  \begin{equation} A \bs u =0,\quad \bs u+ B \bs w\neq0\quad
    \forall \bs w\in F^\ell.
    \label{eq:Z-cw}
  \end{equation} Immediately, the pair $\bs c_Z\equiv {\bs u\choose\bs 0}$
  is a $Z$-like codeword in the original two-block code, and, from
  the second part of Eq.~(\ref{eq:Z-cw}), the top component remains
  non-zero when arbitrary linear combinations of rows of $H_Z$ are
  added.  Thus, $\bs c_Z$  is a non-trivial $Z$-codeword, and it is not
  degenerate to any codeword coming 
  from the other punctured code.  This argument is repeated identically
  for the second code in Eq.~(\ref{eq:Z-punctured}), up to an
  interchange of the $A$ and $B$ matrices.
\end{proof}

\subsection{Proof of Eq.~(\ref{eq:dZ-quantum-identity-A})}

\begin{proof} Without limiting generality, take $\mu=L$.  We are going
to show that the set of non-trivial $Z$-codewords of the original
two-block code $Q$ is split without an intersection between those of
the codes $Q_1\equiv \css(H_X^{(L)},H_Z)$ and
$Q_2\equiv\css(H_X,H_Z^{(L)})$.

  Indeed, a non-trivial $Z$-codeword in $Q_1$ or $Q_2$ is also a
non-trivial codeword in $Q$, and a non-trivial codeword in $Q$ is
necessarily a codeword in $Q_2$ (possibly trivial).  Second, the ranks
of the extended matrices $H_X^{(L)}$ and $H_Z^{(L)}$ both equal to
$\ell$; the two codes have dimensions $k_1=k_{\rm S}+\delta_Z$ and
$k_2=k_{\rm S}+\delta_X$, adding up to the dimension
(\ref{eq:two-block-k}) of the code $Q$.

Third, consider a non-trivial codeword
$\bs c_Z\equiv {\bs u\choose \bs v}$ in $Q$.  Suppose it also happens
to be a (necessarily non-trivial) $Z$-codeword in $Q_1$, i.e.,
$(1-E_A) \bs v=0$.  This implies $\bs v=A\bs s$, for some
$\bs s\in F^\ell$, which, in turn, gives $\bs u=B\bs s +(I-F_A)\bs w$,
a trivial codeword in $Q_2$.  On the other hand, $c_Z$ is necessarily
a codeword in $Q_2$, and any linear combination of the columns of
$H_Z^{(L)}$ cannot modify the value of $(1-E_A)\bs v$.  That is, when
this value is non-zero (i.e., $\bs c_Z\not\in Q_{1}$), $\bs c_Z$ is a
non-degenerate codeword in $Q_{2}$.  This completes the dichotomy and
the proof.
\end{proof}

\section{Detailed proofs for Sec.~\ref{sec:construction}}
\label{sec:proofs-IV}

\begin{proof}[Proof of Theorem \ref{th:permutation-equiv}] (i) A group
automorphism is a permutation of group elements preserving the action
of group operation, $\varphi(a)\varphi(b)=\varphi(ab)$, $a,b\in G$.
Further, for any size-$\ell$ permutation matrix $S$ we can write for
$H_X=(A,B)$,
  $$
  S H_X {\bs u\choose \bs v}=(SAS^{-1},SBS^{-1}){S\bs u\choose S\bs
v},
  $$
  and similarly for $H_Z=(B^T,-A^T)$. Using $S^T=S^{-1}$, it is easy
to verify that scalar products and, in particular, the row
orthogonality (\ref{eq:CSS-orthogonality}), are preserved by this
transformation.

  \noindent (ii) This follows from the proof of (i) if we choose the
permutation matrix $S=\LL(\alpha^{-1})\RR(\beta)$ and remember that
$L$ and $R$ matrices commute.

  \noindent (iii) This is proved similarly, by rescaling block
components of ${\bs c}_Z$, $\bs u\to x^{-1}\bs u$, $\bs v\to y^{-1}\bs
v$, and doing a similar weight-preserving transformation of ${\bs
c}_X$.

  \noindent (iv) This also is a consequence of commutativity of left
and right matrices.  Indeed, if we denote $L\equiv \LL(\alpha)$ and
$R\equiv \RR(\beta)$, it is easy to verify that the modified block
matrices in Eq.~(\ref{eq:css-blocks}) are $A'= AL$, $B'= BR$, so that
components of a ${\bs c}_Z$ vector are transformed as $\bs u\to L^T
\bs u$, $\bs v\to R^T\bs v$, and an identical transformation for the
components of a ${\bs c}_X$ vector, $\bs u_X\to L^T \bs u_X$, $\bs
v\to R^T\bs v$.  This obviously preserves the scalar products between
$X$ and $Z$ codewords, and we only need to verify
    \begin{eqnarray*} H_Z' \bs c_X'&=&\left(R^T B^T,-L^T
A^T\right){L^T\bs u_X\choose R^T\bs v_X}\\ &=&R^TL^T\left(B^T,-
A^T\right){\bs u_X\choose \bs v_X}=0.
    \end{eqnarray*}

    \noindent (v) The proof is also similar to (i), except we use the
symmetric permutation matrix $S=P=P^T$ from Eq.~(\ref{eq:L-R-map}),
and, in addition, interchange the blocks, $\bs u_\mu'=P\bs v_\mu$,
$\bs v_\mu'=\pm P\bs u_\mu$, $\mu\in\{X,Y\}$.

    \noindent (vi) After a permutation of the blocks, this is an
immediate consequence of Eq.~(\ref{eq:transposition}).  The second
form follows from (iii) and (v).

\end{proof}

\subsection{Proof of Statement \ref{th:dbl-coset-codes}}

\begin{proof} The equivalence of the two codes is evident from Theorem
\ref{th:permutation-equiv}(ii).  With Eq.~(\ref{eq:cx-orthogonality}),
(\ref{eq:cx-degeneracy}), the corresponding transformation for a pair
of group algebra elements is $[u,v]\to [u,vx^{-1}]$; in particular,
the row $x$ goes to $1$.  Finally, this invertible map sends the
original double coset $G_axG_b$ to $G_a xG_b x^{-1}=G_a 1
G_{xbx^{-1}}$.
\end{proof}

\subsection{Proof of Statement \ref{th:triple-product}}

\begin{proof} The subgroup $N$ being normal both in $G_a$ and $G_b$
guarantees that we can decompose $G_a=H_a \rtimes N$ and $G_b =
N\ltimes H_b$ as semidirect products\cite{Bechtell-book-1971}, where
$H_a= G_a/N$ and $H_b= N\backslash G_b$ are sets of cosets.  The
semidirect product, e.g., with the normal group on the right, is
defined as a group with elements from the set of all pairs $H_a
\rtimes N =\{(h,\gamma)|h\in H_a, \gamma\in N\}$ and the group
product
  $$
  (h_1,\gamma_1)\cdot (h_1,\gamma_2) \equiv (h_1h_2,
h_2^{-1}\gamma_1h_2\,\gamma_2).
  $$
  Similarly, any element of the double coset $G_a 1 G_b$ can be
written as a triplet, $(\alpha,\gamma,\beta)$, with $\alpha\in G_a$,
$\gamma\in N$, and $\beta\in G_b$, with all triplets in the form
$(\alpha x,x^{-1}\gamma y^{-1},y\beta)$, $x,y\in N$ united into
product-preserving equivalence classes.  The group product is defined
as
  $$
  (\alpha_1,\gamma_1,\beta_1)\cdot (\alpha_2,\gamma_2,\beta_2)
=(\alpha_1\alpha_2,\alpha_2^{-1}\gamma_1\alpha_2\,
\beta_1\gamma_2\beta_1^{-1},\beta_1\beta_2),
  $$
  where the elements from $H_a$ and $H_b$ are forced to commute, and
the abelian property of $N$ is used to ensure the consistency of the
definition.  It is easy to verify the group axioms: thus defined
product is associative, the identity element is the equivalence class
of $(1,1,1)$, and the inverse of $(\alpha,\gamma,\beta)$ is
$(\alpha^{-1}, \alpha \beta^{-1}\gamma^{-1}\beta
\alpha^{-1},\beta^{-1})$, which is both a left and a right inverse.
Finally, this map also gives a natural map for the group algebra
elements $a$ and $b$, with the multiplication by an element of $G_a$
from the left or an element of $G_b$ from the right giving the
expected results.
\end{proof}

\subsection{Proof of statement \ref{th:abelian}}

\begin{proof}[Proof of statement \ref{th:abelian}] As discussed in the
previous section, the square matrices of 2BGA codes under
consideration here have the form of Kronecker products, $A=A_1\otimes
I_{m_b}$, $B=I_{m_a}\times B_1$, where $m_a$ and $m_b$ are indices of
the support groups in $G$, with $\ell=m_am_bc$ and $c\equiv |N|$.
Further, $A_1$ and $B_1$ have square blocks of size $c$ which can be
readily seen to have the form of group algebra matrices
$\LL_{N}(x_{ij})=\RR_{N}(x_{ij})$ and $x_{ij}\in F[N]$.  That is, the
original 2BGA code is an abelian LP code, which can also be seen as an
HP code over the ring $R=F[N]$. As explained in the Appendix of
Ref.~\onlinecite{Panteleev-Kalachev-2020}, any such code can be
decomposed further as a direct sum of HP codes over cyclic rings,
$F[C_{\ell_i}]$, $1\le i\le s$, where the groups $C_{\ell_i}$ are
those in the decomposition of finite abelian group $N=C_{\ell_1}\times
C_{\ell_2}\times \ldots \times C_{\ell_s}$ into a direct product of
cyclic groups.  Each of these HP codes can be also seen as an
$F$-linear quasicyclic LP code, and as a special case of a
hyperbicycle code\cite{Kovalev-Pryadko-Hyperbicycle-2013}.
Importantly, at each step, we can reconstruct matrices over the
original field $F$ as block diagonal matrices, with the blocks given
by the corresponding matrices of the quasicyclic codes in the
decomposition; these transformations preserve the total matrix rank
and the dimension $k$ of the original code.

The final step of the proof is to use Smith normal form (SNF)
decomposition over the polynomial ring $F[x]$ to show that a
quasicyclic LP code can be further decomposed as a direct sum of GB
codes, see Lemma \ref{th:QC-HP} below.  Since matrix ranks are
additive, and rank defects vanish for GB codes, this proves
$\delta_X=\delta_Z=0$ for 2BGA codes under consideration, and also for
all quasiabelian LP codes.
\end{proof}

\begin{lemma}[Decomposition of a quasicyclic LP code]
  \label{th:QC-HP} Given $R\equiv F[x]/(x^\ell-1)$, a ring of modular
  polynomials isomorphic to circulant matrices over the finite field
  $F$, let $A$ and $B$ be arbitrary matrices over $R$.  The code
  $\lp[A,B]$ is isomorphic a direct sum of GB codes.
\end{lemma}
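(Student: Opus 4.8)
The plan is to put both $A$ and $B$ into diagonal form by invertible row and column operations over $R$, and then to exploit the fact that a lifted product of two diagonal matrices decouples into scalar ($1\times1$) blocks, each of which is a GB code. To set up the first step, recall that $\lp[A,B]$ is the lifted (hypergraph) product of the two two-term chain complexes given by $A$ and $B$, the code being the middle homology of the total complex of their tensor product, with the hat map $\widehat{\ }$ of Eq.~(\ref{eq:hat-operator}) supplying the adjoint maps that enter $H_Z$ (a standard description of lifted products, cf.\ Ref.~\onlinecite{Panteleev-Kalachev-2021}). The key elementary observation is that replacing $A$ by $PAQ$, with $P,Q$ invertible over $R$, changes this complex only by an isomorphism (the basis changes $Q^{-1}$ and $P$ on its two terms); tensoring isomorphic complexes yields an isomorphic total complex, so $\lp[A,B]\cong\lp[PAQ,B]$, and likewise $\lp[A,B]\cong\lp[A,P'BQ']$. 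These induce $R$-linear, hence rank- and CSS-structure-preserving, isomorphisms of the quantum code, so I may freely replace $A$ and $B$ by any matrices $\mathrm{GL}(R)$-equivalent to them.

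The second step is to diagonalize $A$ and $B$ via the Smith normal form, and here lies the main obstacle: $R=F[x]/(x^\ell-1)$ is not a principal ideal domain, so SNF is not directly available over $R$. The remedy is to lift. Choosing polynomial representatives gives a matrix $\widetilde A$ over the PID $F[x]$, for which SNF provides unimodular $P,Q$ with $P\widetilde AQ=D_A$ diagonal. Reducing this identity modulo $x^\ell-1$ gives $\bar P A\bar Q=\bar D_A$ over $R$, where $\bar P$ and $\bar Q$ remain invertible because their inverses over $F[x]$ reduce to inverses over $R$ (equivalently, their determinants are nonzero elements of $F$, hence units of $R$). Thus every matrix over $R$ is $\mathrm{GL}(R)$-equivalent to a diagonal one, and by the first step I may assume $A=\diag(\alpha_1,\alpha_2,\ldots)$ and $B=\diag(\beta_1,\beta_2,\ldots)$ with $\alpha_i,\beta_j\in R$.

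With both $A$ and $B$ diagonal, the two chain complexes split as direct sums of the scalar complexes $R\xrightarrow{\alpha_i}R$ and $R\xrightarrow{\beta_j}R$, and since the tensor product distributes over direct sums, the total complex becomes the direct sum over index pairs $(i,j)$ of the scalar total complexes. Taking homology termwise, $\lp[A,B]$ is isomorphic to $\bigoplus_{i,j}\gb[\alpha_i,\widehat\beta_j]$, each summand being the lifted product of the $1\times1$ matrices $[\alpha_i]$ and $[\beta_j]$ over $F[C_\ell]$, i.e.\ a 2BGA code over the cyclic group $C_\ell$ and hence a generalized-bicycle code (the adjoint map simply carries $\beta_j$ to $\widehat\beta_j$, cf.\ Theorem \ref{th:permutation-equiv}). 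Summands with $\alpha_i=0$ or $\beta_j=0$ are degenerate GB codes, which is allowed. This exhibits $\lp[A,B]$ as a direct sum of GB codes, as claimed; every step beyond the SNF lift is routine bookkeeping on the tensor structure.
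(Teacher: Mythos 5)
Your proposal is correct in substance and follows essentially the same route as the paper: both proofs rest on computing the Smith normal form over the principal ideal domain $F[x]$, observing that the unimodular transformation matrices remain invertible after reduction modulo $x^\ell-1$ (their determinants are nonzero scalars, hence units of $R$), and then noting that with $A$ and $B$ diagonal the code decouples into pairwise scalar products of the SNF invariants; your homological phrasing (isomorphism of the two-term complexes under $\mathrm{GL}(R)$-equivalence, tensor product distributing over direct sums) is exactly the factorization the paper carries out explicitly on the CSS generator matrices of Eq.~(\ref{eq:qc-HP-matrices}), with the same caveat that only dimensions and CSS structure, not distances, are preserved. One correction to your final identification: under the paper's conventions the summands are $\gb[\alpha_i,\beta_j]$, not $\gb[\alpha_i,\widehat{\beta}_j]$. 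The hat is already absorbed into the definition of $H_Z$ in Eq.~(\ref{eq:css-blocks}), because transposing a circulant block gives $b(P)^T=\widehat{b}(P)$; applying the hat again double-counts the adjoint. This is not a cosmetic convention issue: $\gb[a,b]$ and $\gb[a,\widehat{b}]$ are generally inequivalent codes and can even have different dimensions, since $\gcd\bigl(a(x),b(x),x^\ell-1\bigr)$ and $\gcd\bigl(a(x),\widehat{b}(x),x^\ell-1\bigr)$ need not have equal degrees, so the extra hat would spoil the dimension count of Eq.~(\ref{eq:gb-decomposition}) that the lemma feeds into the proof of Statement~\ref{th:abelian}. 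With that hat removed, your argument matches the paper's proof step for step.
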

As a reminder, the quasicyclic LP
code\cite{Panteleev-Kalachev-2019,Panteleev-Kalachev-2020} is
constructed by replacing matrix elements of the associated HP code
over $R$, polynomials $h(x)\in R$, with the cyclic permutation
matrices $h(P)$, see Eq.~(\ref{eq:permutation-matrix}).  The
transformation in the proof below is done with the help of SNF
decomposition.  In general, this requires a non-trivial basis change.
That is, only the dimensions of the corresponding spaces are
preserved, but not the code distances.

\begin{proof}[Proof of Lemma \ref{th:QC-HP}]
  Denote the dimensions of matrices $A$ and $B$, respectively, as
  $r_A\times n_A$ and $r_B\times n_B$.  Then the associated HP code
  over $R$ has CSS generators
\begin{equation}
  \label{eq:qc-HP-matrices}
  H_X=\left(A\otimes I_B,I_A\otimes B\right),
  \quad H_Z^T={I_A'\otimes B\choose
    -A\otimes I_B'},
\end{equation}
where $I_A$, $I_B$, $I_A'$, and $I_B'$, respectively, are identity
matrices of dimensions $r_A$, $r_B$, $n_A$, and $n_B$.
The matrix elements of the original matrices $A$ and $B$ are
polynomials from $R$.  Considering the polynomials as elements of
$F[x]$, a principal ideal domain, SNF of such matrices can be readily
constructed using elementary row and column transformations, e.g.,
$A=U_A D_A V_A$, where $U_A$ and $V_A$ are square matrices of size
$r_A$ and $n_A$ with unit determinants, respectively, and
$D_A=\diag\biglb(a_1(x), a_2(x), \ldots \bigrb)$, where $a_j(x)$ along
the diagonal are \emph{SNF invariants}, with each subsequent
polynomial divided by the previous one, $a_j(x)\mid a_{j+1}(x)$,
$0<j<\min(r_a,n_a)$.  The SNF over $F[x]/(x^\ell-1)$ is obtained by
taking these matrices modulo $x^\ell-1$ element-wise.  This preserves
the unit determinants of the matrices $U_A$ and $V_A$, i.e., these
matrices remain invertible.  Denoting the SNF invariants of the matrix
$B$ as $b_j(x)$, $0<j\le\min(r_B,n_B)$, it is easy to see that the
invertible matrices can be factored out in the CSS generator matrices
(\ref{eq:qc-HP-matrices}), e.g., {\small
  \begin{eqnarray*} \lefteqn{ (U_AD_AV_A\otimes I_B, I_A\otimes U_BD_BV_B) }
    & &  \\
    & &\hskip-1em =(U_A\otimes U_B)\,(D_A\otimes I_B, I_A\otimes D_B) \left(
        \begin{array}[c]{cc}
          \!V_A\otimes U_B^{-1}\!\!& \\
                                   &\!\! U_A^{-1}\otimes V_B\!
        \end{array} \right),
  \end{eqnarray*}}%
which preserves the orthogonality between the rows of the transformed
matrices $H_X'$ and $H_Z'$.  Evidently, the transformed matrices are
constructed similarly to Eq.~(\ref{eq:qc-HP-matrices}),
but from the diagonal matrices $D_A$ and $D_B$, so that each row contains
just two polynomials, e.g., the row $i+(j-1)r_A$ of
$H_X$ has $a_i(x)$ in the first block and $b_j(x)$ in the second.
This gives a block-diagonal form of the
original HP code, with individual blocks forming the codes
$\gb[a_i(x),b_j(x)]$ constructed from all pairwise combinations of SNF
invariants of the original matrices $A$ and $B$.  In particular, this
gives the dimension of the original quasicyclic LP code as
\begin{equation}
  k=2\sum_{i=1}^{\min(r_A,n_A)}\sum_{j=1}^{\min(r_B,n_B)}\gcd(a_i(x),b_j(x),x^\ell-1).
  \label{eq:gb-decomposition}
\end{equation}
Unlike the corresponding expressions in Appendix B of
Ref.~\onlinecite{Panteleev-Kalachev-2020}, this formula does not
require that the group algebra $F[C_\ell]$ be semisimple.
\end{proof}

\subsection{Proof of Statement \ref{th:semisimple}}

\begin{proof} The result follows from Statement
  \ref{th:rank-defect-zero}.  Indeed, semisimple ideals $aR$ and $Ra$
  are summands in $R$, and can be generated by idempotents $e_a$ and
  $f_a$, respectively, such that $a=e_a a=af_a$. The conditions of
  Statement \ref{th:rank-defect-zero} are satisfied by taking
  $E_A=\LL(e_a)$ and $F_A=\LL(f_a)$ which necessarily commute with
  $B=\RR(b)$.
\end{proof}

\subsection{Proof of Statement \ref{th:lower-d-central-intersection}}
\label{proof:lower-d-central-intersection}

The proof is based on the following
\begin{lemma}[Trivial quasi-abelian LP codes]
  \label{th:qc-lp-triv}
  Given an abelian group $N$ and a finite field $F$, consider the
  abelian group algebra $R\equiv F[N]$.  Let $A$ and $B$ be matrices
  with elements in $R$ such that the classical codes $C_A^\perp$ and
  $C_B^\perp$ both have zero dimensions.  Then the quasiabelian code
  $\lp[A,B]$ is trivial, $\dim \lp[A,B]=0$.
\end{lemma}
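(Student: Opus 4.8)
The plan is to reduce the claim to the known dimension count for hypergraph-product (HP) codes by realizing everything over $F$. Since $N$ is abelian, $\LL(c)=\RR(c)$ for every $c\in R$, and the regular representation $\mathcal{L}:R\to M_{|N|}(F)$ is an injective $F$-algebra homomorphism that commutes with matrix multiplication and with forming Kronecker products. Applying $\mathcal{L}$ entrywise to $A$, $B$ and to the block matrices of Eq.~(\ref{eq:qc-HP-matrices}) therefore turns $\lp[A,B]$, up to a permutation of the qudits, into the ordinary HP code over $F$ built from the two classical parity-check matrices $H_1\equiv\mathcal{L}(A)$ and $H_2\equiv\mathcal{L}(B)^{T}$. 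This is exactly the identification already exploited in the proof of Statement \ref{th:abelian} and Lemma \ref{th:QC-HP}; here, however, only the coarse invariant $\dim\lp[A,B]$ is needed, so I can work directly over the field $F$ and avoid carrying out the full cyclic/GB decomposition.

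With this identification in hand, I would invoke the standard HP dimension formula $k=k_1k_2+k_1^{T}k_2^{T}$, where $k_i\equiv\dim\ker H_i$ and $k_i^{T}\equiv\dim\ker H_i^{T}$; its square-matrix specialization is the count $2k_ak_b$ quoted in Sec.~\ref{sec:block-structure}. Spelling out the four kernels for $H_1=\mathcal{L}(A)$ and $H_2=\mathcal{L}(B)^{T}$ gives
\begin{equation*}
  k=\bigl(\dim\ker\mathcal{L}(A)\bigr)\bigl(\dim\ker\mathcal{L}(B)^{T}\bigr)
   +\bigl(\dim\ker\mathcal{L}(A)^{T}\bigr)\bigl(\dim\ker\mathcal{L}(B)\bigr).
\end{equation*}
By hypothesis $\dim\ker\mathcal{L}(A)=\dim C_A^{\perp}=0$ and $\dim\ker\mathcal{L}(B)=\dim C_B^{\perp}=0$. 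The first summand contains the factor $\dim\ker\mathcal{L}(A)$ and the second contains the factor $\dim\ker\mathcal{L}(B)$, so both summands vanish and $\dim\lp[A,B]=k=0$, as claimed.

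The one point that has to be handled with care --- and the only real content beyond bookkeeping --- is that the hypotheses constrain \emph{only} the column kernels $\ker\mathcal{L}(A)$ and $\ker\mathcal{L}(B)$, i.e.\ they force $\mathcal{L}(A)$ and $\mathcal{L}(B)$ to have full column rank, but say nothing about the transpose (cokernel) kernels $\ker\mathcal{L}(A)^{T}$ and $\ker\mathcal{L}(B)^{T}$, which can be nonzero whenever $A$ or $B$ is rectangular. One must therefore check that the HP formula splits $k$ into exactly the two logical sectors above, and that each sector carries one --- and a different one --- of the two vanishing factors; it is this pairing that makes both sectors collapse at once. I would also note that the abelian hypothesis enters precisely at the first step: it is what guarantees $\LL=\RR$ and hence a genuine HP code (rather than a true lifted product) after realizing $R$ over $F$, so that the field-level HP dimension formula is legitimately available.
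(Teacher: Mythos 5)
Your opening identification is where the argument breaks. Realizing the group algebra over $F$ does \emph{not} turn $\lp[A,B]$ into the hypergraph product of $\mathcal{L}(A)$ and $\mathcal{L}(B)^{T}$: in Eq.~(\ref{eq:qc-HP-matrices}) the Kronecker products and identity blocks live over $R$, so applying $\mathcal{L}$ entrywise yields, up to row and column permutations,
\begin{equation*}
  H_X \cong \bigl(\mathcal{L}(A)\otimes I_{r_B},\; I_{r_A}\otimes \mathcal{L}(B)\bigr),
\end{equation*}
with identity factors of sizes $r_B$ and $r_A$, whereas the HP code over $F$ built from $H_1=\mathcal{L}(A)$ and $H_2=\mathcal{L}(B)^{T}$ would require identity factors of sizes $r_Bc$ and $r_Ac$, where $c\equiv|N|$. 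Concretely, $\lp[A,B]$ has length $(n_Ar_B+r_An_B)\,c$, while the HP code you name has length $(n_Ar_B+r_An_B)\,c^{2}$, so for $c>1$ the two cannot be permutation-equivalent --- this length reduction is exactly what distinguishes a lifted product from a hypergraph product. Consequently the field-level Tillich--Z\'emor dimension formula is not available, and it genuinely fails for LP codes: for $1\times1$ matrices $A=(a)$, $B=(b)$ over $F[C_\ell]$ the code $\lp[a,b]=\gb[a,b]$ has $k=2\deg\gcd\left(a,b,x^\ell-1\right)$, whereas your formula would give $2\deg\gcd\left(a,x^\ell-1\right)\cdot\deg\gcd\left(b,x^\ell-1\right)$; taking $a=b$ with $\deg\gcd\left(a,x^\ell-1\right)=2$ gives $4\neq 8$. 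Under the lemma's hypotheses both expressions happen to vanish, but that coincidence is not a proof, because the formula you invoke is not a statement about $\lp[A,B]$ at all. Your closing remark also inverts the role of abelianness: $\LL=\RR$ makes $\lp[A,B]$ an HP code over the \emph{ring} $R=F[N]$, not over the field $F$; at the field level the code remains a genuine lifted product unless $N$ is trivial.

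The cokernel bookkeeping you flag (pairing $\ker\mathcal{L}(A)$ with $\ker\mathcal{L}(B)^{T}$ and vice versa) would indeed be the right care to take if the reduction were valid; the missing idea is how to handle ranks over a ring with zero divisors. The paper does this by first decomposing $N$ into cyclic factors, reducing to quasicyclic LP codes following Appendix B of Ref.~\onlinecite{Panteleev-Kalachev-2020}, and then applying the Smith-normal-form decomposition of Lemma \ref{th:QC-HP} over the principal ideal domain $F[x]$: triviality of $C_A^\perp$ and $C_B^\perp$ forces all SNF invariants $a_i(x)$, $b_j(x)$ to be units modulo $x^\ell-1$, so every summand $\gb[a_i,b_j]$ in the decomposition underlying Eq.~(\ref{eq:gb-decomposition}) has dimension zero. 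Equivalently --- and closest in spirit to what you attempted --- once the invariants are all units, the matrices have well-defined full ranks over $R$, and only then can the rank-based HP dimension count be run over the ring, as the paper notes in its alternative argument. If you want to rescue your approach, that is the substitute step: derive full rank over $R$ from the kernel hypotheses, rather than appealing to an $F$-level HP dimension formula.
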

\begin{proof}
  This result is proved similarly to Statement \ref{th:abelian}.
  Namely, we start from the decomposition of the LP code as a direct
  sum of quasicyclic LP codes (see Appendix B in
  Ref.~\onlinecite{Panteleev-Kalachev-2020}), combined with the
  decomposition in Lemma \ref{th:QC-HP}.  With both $C_A^\perp$ and
  $C_B^\perp$ trivial, the SNF invariants of both matrices must all be
  unit.  As a result, every GB code in the decomposition of Lemma
  \ref{th:QC-HP} is trivial, which gives the result immediately.

  Alternatively, we can say that all-unit SNF invariants of $A$ and
  $B$ imply that these matrices have well defined ranks over $R$, and
  follow the conventional rank-based derivation for the dimension of
  HP code\cite{Tillich-Zemor-2009}.  It gives zero when both matrices
  have full row ranks.
\end{proof}

\begin{proof}[Proof of Statement \ref{th:lower-d-central-intersection}]
  The proof goes along the lines of that for the lower bound on the
  distance of conventional HP codes\cite{Tillich-Zemor-2009}; it is
  based on Lemma \ref{th:qc-lp-triv}.  In this proof, matrices over
  $R$ are labeled by capital letters in the usual math italic font,
  while the corresponding matrices over $F$ are labeled in bold
  italic, e.g, $A$ with matrix elements $a_{ij}\in R$ and $\bs A$
  formed by blocks $\LL_N(a_{ij})=\RR_N(a_{ij})$.  We also denote
  $\ell_a\equiv [G_a:N]$ and $\ell_b\equiv [G_b:N]$ the indices of the
  intersection group $N$ in the two support subgroups, so that
  $\ell=c\ell_a\ell_b$.

  The 2BGA code $\lp[a,b]$ is a two-block code constructed from
  matrices $\bs A=\bs A_1\otimes \bs I_{\ell_b}$ and
  $\bs B=\bs  I_{\ell_a}\otimes \bs B_1$, where $\bs A_1$ and $\bs B_1$,
  respectively, are block matrices equivalent to
  $A_1\in M_{\ell_a}[R]$ and $B_1\in M_{\ell_b}[R]$.  Thus, the
  original code is equivalent to the $R$-linear code $\hp[A_1,B_1]$,
  and it is this equivalence that is used to construct the lower
  distance bound.

  Given a vector $\bs e\in F^{2\ell}$ orthogonal to the rows of the
  generator matrix $\bs H_X$ of the 2BGA code $\lp[a,b]$ equivalent to
  $\hp[A_1,B_1]$, we construct sets ${\cal I}_A\subset [\ell_a]$ and
  ${\cal I}_B\subset [\ell_b]$ indexing only the columns of the
  matrices $A_1$ and $B_1$ incident on non-zero elements of $\bs e$ in
  the product $\bs H_X \,\bs e=0$, the sets
  ${\cal I}_A'$ and ${\cal I}_B'$ labeling all columns in the
  corresponding blocks of $\bs A_1$ and $\bs B_1$, and the set
  ${\cal I}={\cal I}_A'\times [\ell_b] \,{\textstyle\bigsqcup}\,
  [\ell_a]\times {\cal I}_B'$ labeling all such columns in $\bs H_X$,
  a disjoint union of the corresponding sets in the left and in the
  right blocks.  Each element of $R$ corresponds to a block of size
  $c\equiv |N|$ in matrices $\bs A_1$ and $\bs B_1$, thus
  $$ |{\cal I}_\mu'|=c\,|{\cal I}_\mu|\le c \wgt (\bs e),\quad \mu\in \{A,B\}.$$
  Denote $\bs H_X'$, $\bs H_Z'$ the CSS generator matrices of the LP
  code constructed from punctured matrices $A_1[{\cal I}_A]$ and
  $B_1[{\cal I}_B]$.  By construction, the shortened vector
  $\bs e[{\cal I}]$ is a $Z$-like codeword in the modified LP code,
  $\bs H_X' \,\bs e[{\cal I}]=0$.  On the other hand, if
  $|{\cal I}_A'|<d_A^\perp$ and $|{\cal I}_B'|< d_B^\perp$, the
  modified LP code must be trivial by Lemma \ref{th:qc-lp-triv}, i.e.,
  $\bs e[{\cal I}]$ can only be a trivial codeword, and thus a linear
  combination of the rows of $\bs H_Z'$.  These latter rows can be
  constructed by shortening a subset of the rows of the original
  matrix $\bs H_X$ (where we drop only positions equal to zero in each
  row of the subset), thus the full vector $\bs e$ is a linear
  combination of the rows of the original matrix $\bs H_Z$.  The
  inequality on the subset sizes is satisfied whenever
  $c\wgt(\bs e)<\min(d_A^\perp,d_B^\perp)$, which proves that the
  distance of the original LP code satisfies
  $d_Z\ge \min(d_A^\perp,d_B^\perp)/c$, and, since $d_Z$ is an
  integer, $d_Z\ge d_0$ as stated.
\end{proof}

\subsection{Proof of Statement \ref{th:upper-d-central-intersection}}

\begin{proof}
  If the code $C_A^\perp\cap C_{J}$ is trivial, its distance is
  infinite, and the upper bound in question is definitely satisfied.
  Assuming otherwise, take any non-zero vector
  $\bs u\in C_A^\perp\cap C_{J}$; the corresponding pair
  ${\bs u\choose \bs 0}$ is clearly a $Z$-codeword in the 2BGA code,
  and we just need to verify that it is not degenerate to a zero
  vector.

  The minimum-weight elements in $C_A^\perp $ are associated with a
  single block of $A\equiv \LL(a)$, e.g., the terms in
  Eq.~(\ref{eq:dbl-coset-decomposition}) with $\beta=1$, the element
  of $F[G]$ corresponding to $\bs u$ has the form
  $u=\sum_{\alpha\in \cal A}\alpha \,u_\alpha$; intersection with
  $C_{J}$ ensures that all $u_\alpha\in \cal I$.  The code
  $C_{B^T}^\perp \cap \widehat{C}_{J}$ contains vectors
  $\bs z\equiv P\bs y$, such that $\bs y\in C_{J}$ and
  $B^T \bs z=0$, where $B\equiv \RR(b)$ and $P$ is the permutation
  matrix in Eq.~(\ref{eq:L-R-map}).  Rewrite the condition
  $B^T\bs z=0$ in terms of the group algebra element associated with
  $\bs y$; with the help of Eq.~(\ref{eq:L-R-map}) it reads
  $$ 0=\RR(b)^T P \LL(y)=P\LL(b)\LL(y)=P \LL(by),$$
  or simply $by=0$.  Again, block structure of $B$ and group symmetry
  guarantees that we can choose $y$ in the form
  $y=\sum_{\beta\in\mathcal{B}} y_\beta \beta$.  By construction,
  $y_\alpha\in {J}$, a maximal ideal, which ensures that
  $uy=\sum_{\alpha}\sum_\beta \alpha u_\alpha y_\beta \beta\neq0$, and
  thus for any $w\in F[G]$, $(u-w b)y=uy\neq0$, which guarantees that
  the pair ${\bs u\choose 0}$ be a non-degenerate $Z$-codeword in
  $\lp[a,b]$.
\end{proof}

\section{Additional examples}
\label{sec:examples}

Table \ref{tab:Cmh-2+6} gives explicitly the group algebra elements
for constructing abelian 2BGA codes from the sequences $kd=n$.
Namely, for a given group
$C_{mh}=C_m\times C_2=\langle x,s| x^m=s^2=xsx^{-1}s^{-1}=1\rangle$,
$m\ge1$, only the maximum-distance codes with $k/2$ a factor of $n$
are shown.  With polynomial decomposition $a=a_0(x)+sa_1(x)$ and
$b=b_0(x)=sb_1(x)$, these codes can be also seen as index-4 qQC
two-block codes constructed from the circulant matrices
$$
A=\left(
  \begin{array}[c]{cc}
    a_0(x)&a_1(x)\\a_1(x)&a_0(x)
  \end{array}\right), \quad
B=\left(
  \begin{array}[c]{cc}
    b_0(x)&b_1(x)\\b_1(x)&b_0(x)
  \end{array}\right).
$$
\begin{table*}[htbp]
  \centering
  \begin{tabular}[c]{c|c||c|c|c||c|c}
    $m$&$\ell$& $n$&$k$&$d$&$a$&$b$ \\ \hline
    % #wt=8(2+8) Cm X C2: r^m=1, s^2=1
% #C4xC2 group 8,2
% 8   2   2   4   [ 2 ]   [ 2+ 3+ 4+ 5+ 8 ]   [ x ]   [ x+ sx^0+ x^2+ sx+ sx^3 ]
% 8   2   4   4   [ 2 ]   [ 2+ 3+ 4+ 5+ 6 ]   [ x ]   [ x+ sx^0+ x^2+ sx+ x^3 ]
% 8   2   8   2   [ 3 ]   [ 2+ 3+ 4+ 5+ 7 ]   [ sx^0 ]   [ x+ sx^0+ x^2+ sx+ sx^2 ]
4 & 8&16& 2 & \bf 4 &$1+x $   &$1+ x+ s+ x^2+ sx+ sx^3 $ \\
  &  &  & 4 & 4 &$1+x $   &$1+ x+ s+ x^2+ sx+ x^3 $ \\
  &  &  & 8 & 2 &$1+s $   &$1+ x+ s+ x^2+ sx+ sx^2 $\\ \hline
% #C6XC2 group 12,5
% 12   5   4   5   [ 6 ]   [ 2+ 3+ 4+ 8+ 9 ]   [ x ]   [ x^3+ s+ x^4+ x^2+ sx ]
% 12   5   12   2   [ 2 ]   [ 2+ 3+ 4+ 5+ 6 ]   [ x^3 ]   [ x^3+ s+ x^4+ sx^3+ x ]
6 &12 &24 & 4 & \textbf{5}  &$1+ x $   &$1+ x^3+ s+ x^4+ x^2+ sx $\\
  &  &   & 12& 2  &$1+ x^3 $   &$1+ x^3+ s+ x^4+ sx^3+ x $\\ \hline
% #C8XC2 group 16,5
%  16   5   8   4   [ 4 ]   [ 2+ 3+ 4+ 7+ 9 ]   [ x^6 ]   [ sx^7+ sx^4+ x^6+ sx^5+ sx^2 ]
 % 16   5   16   2   [ 3 ]   [ 2+ 3+ 4+ 6+ 9 ]   [ sx^4 ]   [ sx^7+ sx^4+ x^6+ x^3+ sx^2 ]
8& 16& 32 & 8 & 4   &$1+ x^6 $   &$1+ sx^7+ sx^4+ x^6+ sx^5+ sx^2 $\\
 &   &  &16 & 2   &$1+ sx^4 $   &$1+ sx^7+ sx^4+ x^6+ x^3+ sx^2 $\\    \hline
% #C10XC2 group 20,5
% 20   5   4   8   [ 6 ]   [ 2+ 4+ 7+ 10+ 17 ]   [ x ]   [ x^5+ x^6+ sx^6+ x^7+ sx^3 ]
% 20   5   8   5   [ 4 ]   [ 2+ 3+ 4+ 6+ 11 ]   [ x^6 ]   [ x^5+ s+ x^6+ x+ sx^2 ]
% 20   5   20   2   [ 2 ]   [ 2+ 3+ 4+ 5+ 6 ]   [ x^5 ]   [ x^5+ s+ x^6+ sx^5+ x ]
10 & 20 &40 & 4 & \textbf{8}     &$1+ x $   &$1+ x^5+ x^6+ sx^6+ x^7+ sx^3 $\\
   &    &   & 8 & 5    &$1+ x^6 $   &$1+ x^5+ s+ x^6+ x+ sx^2 $\\
   &    &   & 20& 2   &$1+ x^5 $   &$1+ x^5+ s+ x^6+ sx^5+ x $\\    \hline
% #C12XC2 group 24,9
% 24   9   8   6   [ 9 ]   [ 2+ 3+ 4+ 7+ 11 ]   [ sx^10 ]   [ x^3+ sx^6+ x^4+ x^7+ x^8 ]
% 24   9   12   4   [ 2 ]   [ 2+ 3+ 4+ 6+ 7 ]   [ x^3 ]   [ x^3+ sx^6+ x^4+ sx^9+ x^7 ]
% 24   9   16   3   [ 4 ]   [ 2+ 3+ 4+ 7+ 9 ]   [ x^4 ]   [ x^3+ sx^6+ x^4+ x^7+ sx^10 ]
% 24   9   24   2   [ 3 ]   [ 2+ 3+ 4+ 6+ 9 ]   [ sx^6 ]   [ x^3+ sx^6+ x^4+ sx^9+ sx^10 ]
12 & 24 & 48 & 8  &6    &$1+ sx^{10} $   &$1+ x^3+ sx^6+ x^4+ x^7+ x^8 $\\
   &    &   & 12 & 4   &$1+ x^3 $   &$1+ x^3+ sx^6+ x^4+ sx^9+ x^7 $\\
   &    &   & 16 & 3   &$1+ x^4 $   &$1+ x^3+ sx^6+ x^4+ x^7+ sx^{10} $\\
   &    &   & 24 & 2   &$1+ sx^6 $   &$1+ x^3+ sx^6+ x^4+ sx^9+ sx^{10}
                                       $\\ \hline
% #C14XC2 group 28,4
% 28   4   4   10   [ 6 ]   [ 2+ 7+ 8+ 14+ 21 ]   [ x ]   [ x^7+ sx^8+ x^2+ x^3+ sx^11 ]
% 28   4   8   7   [ 4 ]   [ 2+ 3+ 4+ 10+ 19 ]   [ x^8 ]   [ x^7+ s+ x^8+ x^9+ sx^4 ]
% 28   4   28   2   [ 2 ]   [ 2+ 3+ 4+ 5+ 6 ]   [ x^7 ]   [ x^7+ s+ x^8+ sx^7+ x ]
14 &28&56  & 4 & {\bf10}    &$1+ x $   &$1+ x^7+ sx^8+ x^2+ x^3+ sx^{11} $\\
   & &  & 8 & 7    &$1+ x^8 $   &$1+ x^7+ s+ x^8+ x^9+ sx^4 $\\
   & &  & 28&  2  &$1+ x^7 $   &$1+ x^7+ s+ x^8+ sx^7+ x $
  \end{tabular}
  \caption{Largest-distance 2BGA codes from abelian groups
    $C_{mh}=C_m\times C_2$, with $k$ a factor of $n=4m$ and
    $W_a=2$, $W_b=6$. The group generators are $x$ and $s$, with
    $x^m=1$, $s^2=1$, and $xs=sx$.  The distances which \emph{fail} to
    satisfy the condition $kd=n$ are given in bold.}
  \label{tab:Cmh-2+6}
\end{table*}

Table \ref{tab:D2m-2+6} gives explicitly the group algebra elements
for constructing non-abelian 2BGA codes with $kd=n$.  All codes are
constructed from the groups
$D_{m}=C_m\ltimes C_2=\langle r,s| r^m=s^2=(rs)^2=1\rangle$, $m\ge1$.
With polynomial decomposition $a=a_0(r)+sa_1(r)$ and $b=b_0(r)+sb_1(r)$, these
codes can be also seen as index-4 qQC two-block codes constructed from
circulant matrices, 
$$
A=\left(
  \begin{array}[c]{cc}
    a_0(x)&\overline{a_1(x)}\\ {a_1(x)}&\overline{a_0(x)}
  \end{array}\right), \quad
B=\left(
  \begin{array}[c]{cc}
    b_0(x)&\overline{b_1(x)}\\b_1(x)&\overline{b_0(x)}
  \end{array}\right),
$$
where
$\overline{a_0(r)}=a_0(r^{-1})\equiv s a_0(r)s$ is the reverse of the
polynomial $a_0(r)$.

\begin{table*}[htbp]
  \centering
  \begin{tabular}[c]{c|c||c|c|c||c|c}
    $m$&$\ell$& $n$&$k$&$d$&$a$&$b$ \\ \hline
    % #wt=8(2+8) D2m = Cm : C2: r^m=1, s^2=1
%%%%%%%%%%%%%%%%%%%%%%%%%%%%%%%%%%%%%%%%%%%%
% #wt=8(2+8) D2m: r^m=1, s^2=1
% #D12 group 12,4
6 &12 &24 & 8  &3     &$1+ r^4 $   &$1+ s r^4+ r^3+ r^4+ s r^2+ r $\\
  &   &   & 12 & 2    &$1+ r^3 $   &$1+ s r+ r^3+ r^4+ s r^4+ r $  \\     \hline 
% #D16 group 16,7
8 &16 &32 & 8  &4     &$1+ r^2 $   &$1+ s r^5+ s r^4+ r^2+ s r^7+ s r^6 $\\
  &   &   & 16 & 2    &$1+ r^4 $   &$1+ s r^3+ s r^6+ r^4+ s r^7+ s r^2 $\\    \hline 
% #D18 group 18,1
9 &18 &36 & 12 & 3    &$1+ r^3 $   &$1+ s+ r+ r^3+ s r^3+ r^4 $    \\    \hline 
% #D20 gro&p 20,4
10&20 & 40& 8  &5     &$1+ r^2 $   &$1+ s r^4+ r^5+ r^2+ s r^6+ r $ \\
  &   &   & 20 & 2    &$1+ r^5 $   &$1+ s r^2+ r^5+ r^6+ s r^7+ r $ \\    \hline 
% #D24 group 24,6
12&24 & 48& 8  &6     &$1+ r^{10} $   &$1+ s r^8+ r^9+ r^4+ s r^2+ r^5 $\\
  &   &   & 12 & 4    &$1+ r^3 $   &$1+ s r^7+ r^3+ r^4+ s r^{10}+ r^7 $\\
  &   &   & 16 & 3    &$1+ r^8 $   &$1+ s r^8+ r^9+ r^8+ s r^4+ r^5 $\\
  &   &   & 24 & 2    &$1+ r^6 $   &$1+ s r^{11}+ r^6+ s r^5+ r+ r^7 $\\    \hline 
% #D28 group 28,3
14&28 &56 & 8  &7     &$1+ r^4 $   &$1+ s r^{11}+ r^7+ s r^5+ r^{12}+ r^9 $\\
  &   &   & 28 & 2    &$1+ r^7 $   &$1+ s r^2+ r^7+ r^8+ s r^9+ r $\\    \hline 
% #D30 group 30,3
15&30 & 60& 12 & 5    &$1+ r^{12} $   &$1+ s r^{14}+ r^5+ r^{12}+ s r^{11}+ r^{14} $\\
  &   &   & 20 & 3    &$1+ r^5 $   &$1+ s r^{13}+ r^5+ r^{12}+ s r^3+ r^2 $\\    \hline 
% #D32 group 32,18
16&32 & 64&  8 & 8    &$1+ r^6 $   &$1+ s r^{12}+ s r^9+ r^6+ s+ s r $\\
  &   &   &  16&  4   &$1+ r^4 $   &$1+ s r^{10}+ s r^3+ r^4+ s r^{14}+ s r^7 $\\
  &   &   &  32&  2   &$1+ r^8 $   &$1+ s r^{11}+ s r^{12}+ r^8+ s r^3+ s r^4 $\\    \hline 
  \end{tabular}
  %%%%%%%%%%%%%%%%%%%%%%%%%%%%%%%%%%%%%%%%%%%%%%%%%%%%%%%%%%%%%
  \caption{As in Table \ref{tab:Cmh-2+6} but for non-abelian dihedral groups
    $D_{m}=\langle r,s|r^m=s^2=(rs)^2=1\rangle$, with $k$ a factor
    of $n=4m$.  Parameters of all codes listed satisfy the condition
    $kd=n$.}
  \label{tab:D2m-2+6}
\end{table*}

\bibliography{lpp,qc_all,more_qc,ldpc,linalg,teach}

\end{document}